\newcommand{\Aron}[1]{}
\newcommand{\Fuhito}[1]{}
\newcommand{\Go}[1]{}
\newcommand{\change}[2]{{#2}}
\def\BibTeX{{\rm B\kern-.05em{\sc i\kern-.025em b}\kern-.08em
    T\kern-.1667em\lower.7ex\hbox{E}\kern-.125emX}}
\let\xthefootnote\thefootnote
\newcommand{\memo}[2][red]{\def\thefootnote{\color{#1}\xthefootnote}\footnote{\color{#1}#2}}
\newcommand{\hiddenmemo}[2][black]{}
\newcommand{\void}[1]{}
\newcommand{\red}[1]{{{#1}}}
\newcommand{\CB}{\mathcal{B}}
\newcommand{\CBH}{{\hat{\mathcal{B}}}}
\newcommand{\CG}{\mathcal{G}}
\newcommand{\CM}{\mathcal{M}}
\newcommand{\CO}{\mathcal{O}}
\newcommand{\E}{\mathbb{E}}
\newcommand{\EH}{\hat{E}}
\newcommand{\LH}{\hat{L}}
\newcommand{\FH}{\hat{F}}
\newcommand{\HH}{\hat{H}}
\newcommand{\MH}{\hat{M}}
\newcommand{\I}{\sqrt{-1}}
\newcommand{\V}{\mathbb{V}}
\newcommand{\R}{\mathbb{R}}
\newcommand{\NRP}{Nakamoto Reward Process}
\newtheorem{prop}{Proposition}
\newtheorem{lemma}{Lemma}
\newtheorem{corollary}{Corollary}
\newtheorem{theorem}{Theorem}
\providecommand{\keywords}[1]
{
  \small	
  \textbf{\textit{Keywords---}} #1
}
\begin{document}
\void{
This extra page fixes the issue with the click-to-source feature due to margin notes. It should be removed right before submission.
\clearpage
\setcounter{page}{1}
}
\void{
\setlength{\marginparwidth}{1.545cm}
\pagestyle{plain}
}

\title{Equilibrium of Blockchain Miners \\with Dynamic Asset Allocation}

\author{{Go Yamamoto}$^{1}$, {Aron Laszka}$^{2}$, {Fuhito Kojima}$^{1}$\\
\small $^{1}${NTT Research Inc.} \\
\small $^{2}${University of Houston}\\
}
\date{}
\maketitle

\begin{center}
This paper is a revision and extension of our work that was published at 2nd Conference on Blockchain Research \& Applications for Innovative Networks and Services (BRAINS 2020)~\cite{yamamoto2020equilibrium}.
\end{center}
\vspace{1em}

\begin{abstract}
\Go{Rewrite abstract.  Downsize and make the focus in the conclusion}
We model and analyze blockchain miners who seek to maximize the compound return of their mining businesses.
The analysis of the optimal strategies finds a new equilibrium point among the miners and the mining pools, which predicts the market share of each miner or mining pool.
The cost of mining determines the share of each miner or mining pool at equilibrium.  We conclude that neither miners nor mining pools who seek to maximize their compound return will have a financial incentive to occupy more than 50\% of the hash rate if the cost of mining is at the same level for all. 
However, if there is an outstandingly cost-efficient miner, then the market share of this miner may exceed 50\% in the equilibrium, which can threaten the viability of the entire ecosystem.

\hiddenmemo{Editorial(Go): I moved long comments to footnote memo.   \\
Fuhito: I don't think it's a good idea to frame this paper as studying the question of whether a miner dominates the system (even though I totally understand this may be a major concern for Bitcoin people). First, even with ``risk-neutral'' miners, a Nash equilibrium among $m$ miners exhibit a qualitatively similar property, such as (i) total hash rate is less than the break-even point, and (ii) in symmetric equilibrium (when miners are symmetric), each of the $m$ miners spend the same hash, so those findings aren't novel predictions of the present paper. In fact, fact (ii) is more general -- for finite symmetric games, there is at least one symmetric Nash equilibrium.\\
\textbf{Go: I agree for (ii) but partially agree for (i).  I think the point of this paper is the finding new strategy.  The new strategy carries the new equilibrium point, and there is a reason for miners to choose the new strategy rather than the old ones.  So we analyzed whether the new strategy motivates the 51\% attack or not.  I agree for (ii) because the existence of the equilibrium is a known fact.   About (i), we can present our problem about the quantitative analysis whether it should big or small, specifically than 50\%.  An important observation will be that, even when the world hash rate is at 85\% of the break-even point, the new equilibrium point can allow a miner to exceed 50\% of the world hash rate.  If a mining pools is outstandingly cost efficient, it will try to expand the hash rate, and it results temporarily increase the world hash rate to, say, 86\%.  Other capitalist miners will respond by decreasing the hash rate because the equilibrium point moved to the lower Mining Assets.  Consequently, the cost efficient mining pool will occupy more than 50\% of the world hash rate if it is extremely efficient compared with other miners.
}
}

\end{abstract}

\keywords{%
Blockchain, Kelly Strategy, Equilibrium
}
\Go{Changing "balancing" to "allocation".  Google says this is more popular word.}

\section{Introduction}

\subsection{Background}
\hiddenmemo{
The outline of the story:
\begin{enumerate}
    \item A blockchain (or bitcoin) system is not secure if miners have financial motivation to dominate the hash rate.
    \item The profitability for myopic (risk-neutral?) miners are well investigated.  Their payoff is the mining reward, and the miners have equilibrium.
    \item The research of risk-averting miners are important because risk-averting strategy outperforms risk-neutral strategies when we think about a repeated series of probabilistic games.  For example, the strategy that maximized the log utility function is known to be the optimal in the repeated games.  The log utility function is chosen by the optimality, not by subjective preference.  
    \item We analyze the decision theory for the asset allocation for the optimal growth of equity.  We call the strategy "capitalist mining".   For that purpose, we propose a new economic model for blockchain mining that allows dynamic strategy.  The model coincides with the old one when miners are risk-neutral.  The capitalist mining outperforms simple mining.
    \item We investigate the game theory of the capitalist miners to determine whether the capitalist miners will threaten the security of the system. 
    \item The capitalist miners will find equilibrium points for each other even with some risk-neutral miners.  So we the system is secure if some miners are capitalist miners.  
\end{enumerate}
}
    \hiddenmemo{Fuhito: As I already said, I don't think that this assertion, i.e., that the log utility isn't an assumption but a result from something else, has a good justification.
    \textbf{Go: As I replied in the mail, I think this is essentially what discussed by Thorp and Samuelson.  We should avoid "utility".}}
    \Fuhito{I think the term ``risk-averse'' is more standard.} 

\void{
\Aron{we can actually skip the entire first paragraph, this is all very widely known}
The Bitcoin paper by Satoshi Nakamoto introduced \change{an excellent}{a} combination of technology and \change{mathematics}{theory} to realize distributed consensus \change{among dynamically changing nodes}{in a public, open network}. 
Among the many contributions of Satoshi's paper, one of the essential contributions is its incentive design.
\Aron{this needs to be rewritten, the key idea is that no single entity (or group of colluding entities) controls more than 50\% of the mining power}The \change{basic}{key} idea to make the Bitcoin nodes secure against the 51\% attack is, with assuming a maximum $k$ dishonest mining power, gathering over $k$ honest mining power.  The protocol rewards a certain amount of Bitcoin to the winner of the Proof of Work (PoW) game to assure such honest mining power exists.  If the reward is \change{reasonable for miners}{high enough to make mining profitable}, the Bitcoin network can maintain enough \change{number of}{} honest miners. Thus, to make the Bitcoin network long-term sustainable, the reward should be \Aron{again, `reasonable' is not well defined in this context}reasonable for miners in the long-term. 
}
\change{In reality}{In Bitcoin network}, most of the mining power is controlled by mining pools, and most of the \Aron{mining power and hash rate are the same, but this sentence make it look like they are not. \textbf{Go: I think control of mining power is about the pool management, hash power generation means mining workers who possibly work for pools.}}hash rate is produced by ``mining factories'' that equip \change{(\Aron{what do you mean by fixed?}fixed)}{}ASIC mining machines~\cite{Taylor2017}. \Aron{not obvious how this follows \textbf{Go: How about this}} \change{Hence,}{Since ASIC mining machines have limited purposes other than mining,} the economic behavior of a miner depends on its long-term prediction rather than a short-term benefit. 
\Aron{how is this connected to the previous sentence?  \textbf{Go: I think the sentences is now connected.}}
\void{  \Fuhito{As \I wrote some time ago, I think the next sentence should be deleted.} We need a sustainable PoW-base cryptoasset.  
}
Hence, it is necessary to analyze long-term incentives that the PoW mechanism gives to each miner and to establish the theory to evaluate the sustainability of the cryptoasset. 
The theory is useful not only for miners to determine their strategy but also for protocol designers and business entities to use permissionless blockchains for their applications. There \change{is much}{exist prior work on the} \Aron{again, how is this related? \textbf{Go: Clarified.}}\change{game-theoretic}{}analysis of the relationship between the short-term change of \change{price}{mining cost} and mining power~\cite{MDOK2014, BankofCanada2018}\change{. However, it is not clear that the analysis is still valid for the long term. }{; however, the short-term analysis does not explain how people invest in mining assets for long-term profit.  We need a new approach for analyzing how the change of mining power affects profit on the long term.}

\subsection{Profitable Strategies for Repeated Games}
\Aron{needs to be connected to previous paragraph, we suddenly start talking about single-shot and repeated games without any explanation of why this is related to sustainability of mining}
To illustrate the difference between optimal strategies for \change{single-shot}{short-term  profit} and \change{}{those for long-term profit in} repeated games, we consider the following example.
Suppose that there is a repeated \change{coin-flip}{}game in which you \change{}{can bet a certain amount and flip a coin in every stage. You} gain 23\% of
\change{the}{your} bet with probability 1/2, while you lose 20\% of \change{the}{your} bet with \change{the rest of
the}{}probability \change{}{1/2}.  \change{You}{Suppose that at the beginning of the game, you} have 1,000 USD in your hand, and you can choose any amount of bet from your hand for each stage of the coin-flip~game. 

\change{}{First of all, if you keep betting a constant amount, then the strategy is likely to be outperformed in the long-term by a riskfree strategy that generates the compound return of the riskfree rate.  Hence betting in the game implies you will seek a better compound return rate than the riskfree rate.}
\change{If you bet all in,}{However, if you bet all in,}
 then you are likely to find your money is halved\Fuhito{``You will find your money is halved'' here is an overstatement because 45 times win and 45 times losses is not 100 percent probability event. Strictly speaking we will need to write down the distribution over final outcomes (or expectation or whatever criterion we opt to use), but if we want to be more informal --like what we seem to be doing here-- even in that case, we should sday something like ``You are likely to find your money is roughly halved''.} after about $90$ times because you are likely to result in $45$ wins and $45$ \change{loses}{losses}, so it will be \Aron{don't put linebreaks in the middle of a paragraph}\Aron{use $\cdot$ or $\times$ to denote multiplication} $1000 (1+ 23\%)^{45}(1-20\%)^{45} \approx 480$~USD.
\void{\change{Indeed, the}{The} time-aggregated return rate \change{is better to}{can} be calculated by log-return rate, and  \Aron{what is the base of this logarithm? if it is normal, then it should be $\ln$}$\log(1.23) \cdot 0.5 + \log(0.80) \cdot 0.5 \approx -0.0081$.
}

\Aron{this should not be in the introduction of a CS paper \textbf{Go: I agree it does not like a CS paper, but I think this paragraph is OK and is a good illustration because this paper is not for pure CS.}}
\change{
Is it possible to make a \change{profit}{positive compound return} from the game?  The answer is yes if you adjust the amount of bet.\Fuhito{Like I said before, this is not a fair comparison. Even in the above ``100 percent bet'' in the last paragraph, the amount of bet changes dynamically, although the fraction of the money bet is constant. \textbf{Go: Make sense.  I removed "dynamically" hare.  In general it is dynamic but not here.}}
}{
You can improve the compound return rate by adjusting the amount of bet.
}
Let \(W\) be the random variable that takes \(W=0.23\) with
probability \(1/2\) and \(W=-0.2\) with \(1/2\).
Let $f$ be a real number $0<f<1$ that decides the asset allocation for the bet:
If we have assets \(A\), you will bet \(fA\) in the game, and
will keep \((1-f)A\).
Your assets after the coin-flip game will be
\( (1-f) A + f (1+W) A,\)
so the probabilistic return rate is
\(X =  fW\).
\void{
The compound return rate \(X^{(k)}\) after \(k\) coin-flips is given
by
\(1+ X^{(k)} =  (1+X(1))(1+X(2))\cdots(1+X(k)).\)\Aron{is this period supposed to denote multiplication?}
 \(X(i)\) is an independent copy of \(X\) with possibly different \(f\)
for each \(i\).
}
To maximize the compound return rate, you would like to maximize
\(E[\log(1+X)] = \log( 1  + 0.23 f) 0.5 + \log(1 - 0.2 f ) 0.5.\)
By solving \(\frac{d}{df}{E[\log(1+X)]}=0\), it is easy to see the
maximum is attained when $f$ is \(f^* \approx 0.33\).  Your expected log-return rate at $f^*$ is about $0.0024$\change{.}{, a positive rate.}  \change{It is positive, so the game is profitable.}{}  
\void{
Indeed your profit will grow as in a numerical experiment shown in Figure \ref{fig:my_label}.
\begin{figure}
    \centering
    \includegraphics[width=8cm]{"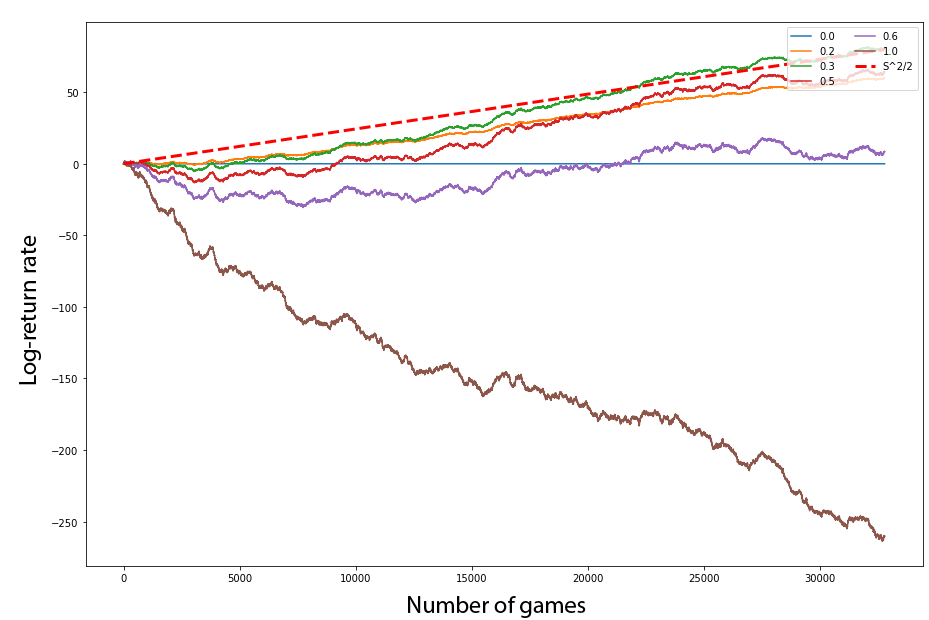"}
    \caption{Profitability of the +23\%/-20\% coin-flip game}
    \label{fig:my_label}
\end{figure}
As $f$ gets closer to $f^*$ the game becomes profitable.  The red broken line shows the theoretical optimal.  
The graph $f=1$ shows the player who bet all in the game. It is a random walk with a negative drift.
}

\void{
\Fuhito{Sorry to stay this many times, but I don't think that this example justifies the assertion here. What's a static strategy and dynamic strategy? What is called ``static'' is to bet 100 percent of money, but isn't bettin a constant fraction of money called ``dynamic'' when we talk about $f^* \sim 0.3261$?  \textbf{Go: Make sense.  How about this?  I replaced dynamic with stateful and made two contexts of CS and finance.  stateful/stateless in CS, static/dynamic in finance.}}This example \change{indicates}{illustrates that} when playing a series of probabilistic games, strategies decided by \change{dynamic}{stateful} algorithms can outperform those by static strategies with respect to the compound return.
\memo{Fuhito: I don't like the word ``stateful''. Also ``risk-neutral ... strategy'' isn't a good term because risk attitudes are not choices of players but primitive of the model (of payoffs).  \textbf{Go: How about this?}}
}
\void{
In the series of the coin-flip games, the profitable player dynamically adjusted the decision for the amount of the bet in each game according to the internal state of the player's total assets.
}

The choice of optimal allocation factor $f^*$ under the payoff function of the log-return rate is called the Kelly strategy~\cite{Kelly1956, Thorp1969}.  
The Kelly strategy is known to be the optimal asset allocation for repeated games with respect to the \change{}{expected} compound return rate~\cite{breiman1961, FinkelsteinWhitley1981}.

\subsection{Dynamic Asset Allocation in Blockchain Mining}
\change{Considering the repeated process of Proof-of-Work blockchain mining, we would like to ask a similar question.
Can we apply the Kelly strategy on the mining algorithms?  Indeed, this paper answers yes.}{In this paper, we apply the idea of Kelly strategy to Proof-of-Work blockchain mining.}
\change{However, to}{To} formulate the Kelly strategy in blockchain mining, we need \change{to establish}{}a model for the economics of blockchain mining \change{in a way}{} that allows dynamic mining algorithms.
To the best of our knowledge, the rewards of blockchain mining are modeled using the Poisson process (for example \cite{Rosenfeld2011}).
In the model, the miners are modeled as a fixed algorithm that receives probabilistic rewards according to the Poisson process.
This paper proposes a model that allows dynamic mining algorithms by formulating the economics of mining \change{in}{as} a binomial tree model for the probabilistic return from mining reward minus cost, while a Poisson process triggers the growth of the binomial tree.  We show that the proposed model is a generalization of the existing Poisson reward model.  

Using the \change{proposed}{new} model, we present an analysis of the decision on \change{the}{}dynamic asset allocation.
We call the mining strategy with the optimal \change{decision of the}{}asset allocation the \emph{growth-rate mining}.  
The analysis of the decision also finds the equilibrium point of hash rates among the growth-rate miners.

\subsection{Predicting Share of Bitcoin Mining Pools}
The equilibrium point predicts miners' shares of the hash rate of Bitcoin.
Assuming all the miners have the same cost rate of mining, the growth-rate miners will occupy about $9\%$ of the world hash rate when the world hash rate is at $85\%$ of the break-even point.  If a growth-rate miner is more cost-efficient by about $5\%$ than other growth-rate miners, then the miner will occupy about $13\%$.
 Interestingly, the
prediction \change{looks coincide}{coincides} roughly with reality.   

The growth-rate mining strategy does not threaten the security of Bitcoin \change{for}{with respect to} the $51\%$ attack as long as there are no miners who \change{realize an}{have} outstanding cost-efficiency.
However, the \Aron{what do you mean by ``bound of the mining cost for the $51\%$ attack''? \textbf{Go: It means the lower bound of the cost.  Changed to an explicit description.} }lower bound of the mining cost for the $51\%$ attack is not necessarily unrealistic in the current Bitcoin environment:
\change{The}{the} equilibrium point exceeds $50\%$ of the world hash rate if a miner is about $70\%$ more cost-efficient than other~miners.

\Go{Section Our contribution seems redundant.}
\void{
\subsection{Our Contribution}
\Go{Can we downsize this section? I think some sentences are redundant.}
First, in \cref{sec:model}, we formulate a mathematical model for the profitability of blockchain mining that allows miners to change behavior as the mining continues.  The model allows for miners to optimize the asset allocation using the balance sheet as the internal states.  Taking the growth rate of the miner's equity as the payoff function, we can discuss the optimal decision for the profitability of mining and the equilibrium among miners who take the same strategy.   The existing Poisson reward model, which gives rewards to miners according to the Poisson process, does not have known expression that allows miners who change behavior according to the internal states.   We start from a model that naturally allows stateful miners, and show that the model replicates the Poisson reward model if miners do not change the behavior.

Second, in \cref{sec:decision}, we analyze the asset allocation strategy for blockchain miners by using the new model.  The result produces the optimal decision of balance sheets for the best growth rate of miner's equity.  It predicts the long-term behavior of mining pools from the following findings. (1) Each miner will have a reason to achieve a unique balance sheet decided by the mining environment and the miner's cost factors.  (2) If every miner tries to achieve the best balance sheet, all of them become unprofitable. (3) The equilibrium point among the miners predicts the miners will keep the minor shares if all the miners have the same level of the mining cost.

\Aron{I added this (and the section references above) because many reviewers expect a description of paper organization.}
Finally, we discuss practical implications in \cref{sec:implications} and provide concluding remarks in \cref{sec:concl}.
}

\subsection{Related Works}

\subsubsection{Known Equilibrium Points Among Miners}
 Chiu et al.\ claim a Nash equilibrium point from the Cournot game setting the mining reward minus cost as the payoff function for Player~$i$~\cite{BankofCanada2018}.  In our notations defined in Section \ref{sec:model}, the hash rate at equilibrium is given by
\Aron{these symbols have not been introduced and thus are meaningless to the reader \textbf{Go: Updated}}$ \MH_i = \frac{m-1}{m^2} \frac{B}{c}$ when $c_i = c$ for all $i$.

Pagnotta et al.\ claim equilibrium points of hash rates based on miners' profits~\cite{Pagnotta2018}.  They focus on the analysis of Bitcoin's equilibrium price under a model of the miners' network.  The hash rate at the equilibrium is essentially similar to that described in~\cite{BankofCanada2018}.

Cong et al.\ claim equilibrium points of mining pools' size based on the CARA utility function~\cite{Cong2018}, and the equilibrium point depends on an exogenous parameter indicating the degree of risk-aversion.

\subsubsection{Mining Pools}

Wang et al.\ consider the mining pools' choice between being open or closed to miners: the former strategy is likely to be more efficient since attracts more miners, while the latter strategy protects the pool from certain attacks~\cite{wang2019pool}.
The authors model the pools' choice as a two-stage game, in which pools choose to be open or not and to attack or not, and find that weaker pools are more likely to attack.

Qin et al.\ study how miners select which mining pool to join, considering pay-per-share, pay-per-last-$N$-share, and proportional reward mechanisms~\cite{qin2018research}.
The authors model pool selection as a risk decision problem based on maximum-likelihood criterion, which can provide managerial insights for miners.
Liu et al.\ study the dynamics of mining pool selection and find that the hash rate for puzzle-solving and block propagation delay are the two major factors that determine the results of the competition between mining pools~\cite{liu2018evolutionary}.

Schrijvers et al.\ study the incentive compatibility of mining pool reward mechanism using a game-theoretic model, in which miners can choose between reporting or delaying when they discover a share or full solution~\cite{schrijvers2016incentive}.
The authors show that proportional rewards are not incentive compatible, but the pay-per-last-$N$-shares mechanism is in a more general model, and they introduce a novel incentive compatible mechanism.


\section{Mining with Dynamic Asset Allocation}
\label{sec:model}
\subsection{Model}

\Aron{there should text with sentences here}
We model the economics of blockchain mining as a repeated reward process.

\textbf{Environment}
\begin{itemize}
    \item  \(B\) is the reward for mining the next block.
\item \(\tau\) is the average time interval between new block arrivals.
\item \(r\) is the riskfree rate for time interval $\tau$, usually the interest rate for Treasury bonds.
\end{itemize}

\hypertarget{header-n2}{%
\textbf{Players}\label{header-n2}} \change{}{There are a finite number of Players.}
\change{Player}{Each Player}~\(i\) has the following parameters.
\begin{itemize}
\item
  A set of balance sheets $\CO_i \subset \CO$ \change{that Player chose}{which the Player chooses} one from.   $\CO$ is the set of all possible balance sheets $\CO =\{ (E, L, M, F)\in \R_{\ge0}^4 \}$ \change{}{that} \change{satisfies}{\Aron{since subject is balance sheet, not the set}satisfy} 
   \(E + L = M + F \), and \(L=0\) if \(F \ne 0\).
 We call \(E\) Equity, \(L\) Liabilities, \(M\) Mining Assets, and \(F\) Riskfree Assets. 
\item
  Price of facility \(d_i\), \change{}{which is} the average price for facilities that produce the unit hash rate.
\item
  Cost rate \(c_i\), \change{}{which is} the average cost for running the device per the unit hash rate per time interval of $\tau$.
\end{itemize}

\textbf{Shared information}
The parameters for the Environment are publicly known.
Also, each Player's existence and its hash rate of $M_i/d_i$ are publicly known.

\Go{The Nakamoto game will be simplified.  Keep the current Single Nakamoto Game with interest payments as the Nakamoto Game, and we can say the repeated Nakamoto Game for the full mining }
\hypertarget{header-n33}{%
\textbf{\change{Single Nakamoto Game}{\NRP}}\label{header-n33}}
The \change{Single Nakamoto Game}{\NRP} is \Fuhito{The word ``game'' is saved for a specific meaning in game theory, and in particular we don't use that word for a random event. Changing ``is'' to ``involves'' is bad because strictly speaking it won't give a definition of what Nakamoto Game is, but I think this is better than ``is'' nevertheless.  \textbf{Go: Agreed that involves is not good.  How about the new name?}} a timeless random event for Players with balance sheets.  \red{Of all Players who play the game, Player $i$ with Mining Assets $M_i$ is exclusively randomly chosen with probability
\[p_i=\frac{M_i/d_i}{\sum_j{M_j/d_j}},\]
and obtains revenue of \(B\). 
In addition, each Player $j$ always \change{pay}{pays} cost \( c_j M_j/d_j\).}  We call \(p_i\) the \textit{success probability} of mining for Player \(i\).
Let $R_i$ be the random variable for the revenue minus the cost.
We call $R_i$ the return of the \change{Single Nakamoto Game}{\NRP}.

\hypertarget{header-n71}{%
\textbf{Nakamoto Game}\label{header-n71}}
The Nakamoto Game is a repeated game with the following stage game in finite time interval from $t=0$ to $t=T$.
Let $\Theta$ be the Poisson process with \(\lambda = \frac{\sum_j{M_j/d_j}}{D}\).
\(D\) is a parameter adjusted so that $D$ is close to $ \tau \sum_j{M_j}/d_j$.
\hiddenmemo{Fuhito: I'm not exactly sure, but my guess is that $D$ is assumed to be exactly equal to $ \tau \sum_j{M_j}/d_j$? By the way, $d_i$ is assumed to be the same among players $i$ and then normalized at $1$ soon, so I think it's better to just erase $d$ throughout (if I'm missing some other place in which different $d_i$ becomes relevant, sorry and I understand we want to include it). Indeed, later we refer to $M_i$ directly as a hash rate, so I think it's better to treat $M_i$ itself as the hash rate from the beginning. \\
\textbf{ Go: I prefer keeping $d_i$ here because readers will not be familiar with notations that identify the Mining Assets with the hash rate.  Later we drop $d$ by declaring we assume all the Players are homogeneous.}
}

\begin{enumerate}
\def\labelenumi{\arabic{enumi}.}
\item
  For all $i$, Player $i$ chooses balance sheet $\CB_i \in \CO_i$.
\item
  Wait a trigger according to $\Theta$.
\item  \red{All the Players execute the \change{Single Nakamoto Game}{\NRP}.  For all $i$, Player $i$ with the Mining Assets $M_i$ of $\CB_i = (E_i, L_i, M_i, F_i)$ obtains the return $R_i$.}
\item For all $i$, Player $i$ pays (or receives) interests $r (L_i - F_i)$.
\end{enumerate}
\hiddenmemo{Fuhito: I wonder if we can switch the order of items 1 and 2 in the above description. This is because it seems closer to what we think of as a ``realistic'' model of games played by the bitcoin miners. In the present order, a randomly realized trigger (which is like ``successful mining'' lets all players choose how much hash power to expand at that moment, but then this amount affects who wins the present bitcoin mining. Surely this model is an abstraction anyway, but to the extent possible, I think it's better to match the timing closer to what's going on in real bitcoin mining. (If I still misunderstood what you are doing here, my apologies)\\
\textbf{GO: I modified so that Players choose BS before waiting for the trigger.  I considered our convenience in reusing this model in our next paper.  For example, I can imagine the analysis of the model in which the trigger is propagated with delay or intercepted by some malicious party.  It will explain the economics of the selfish mining attacks and related ideas.}} 
The payoff for each stage game of the Nakamoto Game is  $\log(1+ \frac{R_i -r(L_i-F_i)}{E_i})$, and the payoff for the Nakamoto Game is the sum of the payoff of the stage games.

When the Players have the same facility price \(d = d_i\), we say that the
 Players are \textit{homogeneous}\change{, and we normalize the units}{. In this case, we normalize the prices without loss of generality} so that $d=1$.
We say Player is \textit{static} if it always chooses the fixed balance sheet.

\subsection{Assumption on the Variance of Cost Rate}
The Nakamoto Game models the mining cost rate as a constant for each new block arrival,
ignoring the timewise variance of the cost.  \change{It is because of the following reasons:}{This assumption is realistic for two reasons.}
\change{(1) Most}{First, most} of the \Aron{what does this have to do with the cost rate? \textbf{ Go: added mention. actual ratio of the two sources of variance is about 1:share. }}variance of the return in the one-shot mining comes from the variance from the mining reward\change{, and
(2)}{ and that from the cost rate is minor in practical settings.  Second,} we are interested in the behavior of Players that \change{remains}{remain} robust to change of external factors in block arrival timing such as other miner's behavior with possible malicious intentions, the delay of block propagation network, possible forks, and so on.

\subsection{Return from the \change{Single Nakamoto Game}{\NRP}}
\hiddenmemo{Fuhito: I am still confused about what the content of this subsection is, especially Theorem 1. The only difference between the ``Nakamoto Game'' and the ``Poisson reward model'' is, as far as I can tell, that in the former, each Player is allowed to choose the balance sheet, while in the latter, the hash rate is exogenously given. Isn't it obvious from this fact that the rewards in these two processes are equivalent, or am I missing something more substantial here?\\
\textbf{Go: The two models are different formulations, and the relevance is non-trivial.   The revenue of the Nakamoto Game is "Wait for $1/\lambda$ minutes on average, and draw a random binomial process to gain the reward or none."  The Poisson model is "Wait for $1/\lambda_{i}$ minutes on average, and always obtain the reward".  The former is the compound Poisson process, and the latter is the standard Poisson process.  Fuhito's comment will be reasonable if we look at the 1st moment only.  The correspondence at the 2nd moment and the higher are not trivial.}

\textbf{  For example, we can model the cost in the Poisson reward model by $-cT$, $T$ is the time the game ends.  We can import this cost term to the Nakamoto Game using the equation on the moment generating functions as discussed in the paper, but the corresponding cost term in the stage game is the derivation of the Dirac delta function.}

}
For a random variable $X$, the moment generating function of $X$ is defined as \(\CM_X(u) = \E[e^{uX}]\).
\void{When $X$ is valued on non-negative integers, the probability generating function of $X$ is defined as
\(\CG_X(s) = \CM_X(\log(s)).\)
}

\begin{prop}
Given balance sheets $\CB_i =(E_i, L_i, M_i, F_i) \in \CO_i$ for all $i$, let $R_i$ be the random variable for the return from the \change{Single Nakamoto Game}{\NRP} for Player~$i$.  Then,
\(\CM_{R_i}(u) = \CM_{\text{Revenue}}(u) \CM_{\text{Cost}}(u)\)
for 
\(\CM_{\text{Revenue}}(u) = p_i e^{u  B} + (1-p_i) \)
and
\(\CM_{\text{Cost}}(u) = e^{-u c_i M_i/d_i}.\)
\end{prop}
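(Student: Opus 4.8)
The plan is to write the return $R_i$ explicitly as the difference between a random revenue term and a deterministic cost term, and then exploit the elementary fact that the moment generating function turns a sum into a product whenever the summands are independent — and a constant is trivially independent of any random variable. Concretely, I would first observe that, by the definition of the \NRP{}, the revenue of Player~$i$ is a two-valued random variable, say $V_i$, that equals $B$ precisely on the event that Player~$i$ is the uniquely chosen winner (probability $p_i$) and equals $0$ otherwise (probability $1-p_i$). The cost, by contrast, is the constant $c_i M_i/d_i$ that Player~$i$ pays regardless of the outcome. Hence $R_i = V_i - c_i M_i/d_i$.

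Next I would apply the definition $\CM_{R_i}(u)=\E[e^{uR_i}]$ directly to this decomposition. Substituting $R_i = V_i - c_i M_i/d_i$ and pulling the deterministic factor out of the expectation yields $\CM_{R_i}(u) = e^{-u c_i M_i/d_i}\,\E[e^{u V_i}]$, so the first factor is already $\CM_{\text{Cost}}(u)$. It then remains to evaluate $\E[e^{u V_i}]$ from the two-point law of $V_i$, which gives $p_i e^{uB} + (1-p_i)e^{u\cdot 0} = p_i e^{uB} + (1-p_i)$, matching $\CM_{\text{Revenue}}(u)$ and completing the factorization.

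I do not expect a genuine obstacle here: the result is essentially the statement that the MGF of a shifted random variable is the MGF of the original scaled by the exponential of the shift. The only point requiring care is the (conceptual rather than technical) observation that the cost is genuinely deterministic in the single \NRP{} — it does not depend on which Player is selected — so it contributes a pure exponential factor $e^{-u c_i M_i/d_i}$, corresponding to a rigid translation of the distribution, and therefore factors cleanly out of $\E[\,\cdot\,]$. One could alternatively write $V_i = B\cdot \mathbf{1}_{\{\text{Player }i\text{ wins}\}}$ and invoke the standard Bernoulli MGF, but direct evaluation of the two-point expectation is shorter and avoids introducing extra notation.
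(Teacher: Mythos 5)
Your proposal is correct and takes essentially the same approach as the paper: the paper's own justification is the one-line observation that $\CM_{X+Y}(u) = \CM_X(u)\,\CM_Y(u)$ for independent random variables $X, Y$, and your argument simply makes this explicit by decomposing $R_i$ into the two-point (Bernoulli-type) revenue and the deterministic cost $c_i M_i/d_i$, which is trivially independent and factors out as the pure exponential $e^{-u c_i M_i/d_i}$.
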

\void{
We obtain the proposition because  $\CM_{X+Y}(u) = \CM_X(u)  \CM_Y(u)$ for independent random variables $X, Y$.
}

\begin{corollary}\label{cor:EV}
 For the return $R_i$ of the \change{Single Nakamoto Game}{\NRP} with homogeneous Players,
\[\E[R_i ] = \frac{ B M_i}{\sum_j{M_j}} - c_i M_i,
\text{and}\ 
\V[R_i] = \frac{ B^2 M_i M_{-i}}{(\sum_j{M_j})^2}\]
for \(M_{-i}= \sum_{j\ne i}{M_j}\).
\end{corollary}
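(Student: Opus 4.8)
The plan is to derive both moments directly from the moment-generating function supplied by the preceding Proposition, specialized to the homogeneous case. Since the Players are homogeneous we normalize $d = d_i = 1$, so that $p_i = M_i / \sum_j M_j$ and $\CM_{\text{Cost}}(u) = e^{-u c_i M_i}$. The return therefore has the product MGF
\[
\CM_{R_i}(u) = \bigl(p_i e^{uB} + (1-p_i)\bigr)\, e^{-u c_i M_i}.
\]
I would recover $\E[R_i] = \CM_{R_i}'(0)$ and $\E[R_i^2] = \CM_{R_i}''(0)$ by differentiating this expression and evaluating at $u=0$, then form $\V[R_i] = \E[R_i^2] - \E[R_i]^2$.

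An equivalent and slightly cleaner route, which I would use to organize the computation, is to read off from the same MGF that $R_i$ is distributed as a scaled Bernoulli revenue shifted by a deterministic cost: $R_i = B\,\mathbf{1}[\text{Player }i\text{ wins}] - c_i M_i$, where the indicator is $1$ with probability $p_i$. The deterministic shift $-c_i M_i$ contributes to the mean but not the variance, which immediately gives $\E[R_i] = B p_i - c_i M_i$ and $\V[R_i] = B^2\,\V\!\bigl(\mathbf{1}[\text{win}]\bigr) = B^2 p_i (1 - p_i)$.

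The only remaining work is algebraic bookkeeping. Substituting $p_i = M_i / \sum_j M_j$ into the mean yields $\E[R_i] = \frac{B M_i}{\sum_j M_j} - c_i M_i$ as claimed. For the variance I would simplify
\[
p_i(1-p_i) = \frac{M_i}{\sum_j M_j}\cdot\frac{\sum_j M_j - M_i}{\sum_j M_j} = \frac{M_i M_{-i}}{\bigl(\sum_j M_j\bigr)^2},
\]
using the definition $M_{-i} = \sum_{j \ne i} M_j$, so that $\V[R_i] = \frac{B^2 M_i M_{-i}}{(\sum_j M_j)^2}$.

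There is no substantive obstacle here: the statement follows almost mechanically once the Proposition's factorization is in hand. The one point worth stating explicitly, so the reader is not confused, is that homogeneity is exactly what lets us drop $d_i$ (setting $d=1$) and thereby obtain the clean cost term $c_i M_i$ and the success probability $p_i = M_i/\sum_j M_j$; without that normalization the same computation would carry the $d_i$ factors through unchanged.
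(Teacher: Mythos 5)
Your proposal is correct and matches the paper's (implicit) argument: the paper states this corollary as an immediate consequence of the preceding Proposition's MGF factorization, which is exactly what you do by recognizing $R_i$ as a Bernoulli$(p_i)$-scaled revenue $B$ minus the deterministic cost $c_i M_i$ and computing the two moments. The algebraic simplification $p_i(1-p_i) = M_i M_{-i}/(\sum_j M_j)^2$ and the observation that the deterministic shift leaves the variance unchanged are precisely the omitted "math ops" the corollary relies on.
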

\Go{Add some descriptions to help readers on math ops}

\begin{corollary}
Let $Y_i = B/c_i$.  Player $i$'s response in \change{Single Nakamoto Game}{Nakamoto Game} satisfies $M_i \le Y_i - \sum_{j\ne i}{M_j}$ if $(0,0,0,0)\in \CO_i$.  In particular $M_i=0$ if $Y_i \le \sum_{j\ne i}{M_j}$.
\end{corollary}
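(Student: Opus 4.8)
The plan is to treat the null balance sheet $(0,0,0,0)\in\CO_i$ as a guaranteed outside option. By not participating a Player secures a stage payoff of $0$, so any balance sheet that a return-maximizing Player is willing to play must yield a stage payoff of at least $0$. The whole argument is then a matter of converting this nonnegativity of the $\log$-payoff into the stated linear inequality on $M_i$, using the first moment already computed in \cref{cor:EV}.

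First I would write the stage payoff of a balance sheet $\CB_i=(E_i,L_i,M_i,F_i)$ with $E_i>0$ as $\E[\log(1+X_i)]$, where $X_i=\frac{R_i-r(L_i-F_i)}{E_i}$ is the realized return on equity. Since $\log$ is concave, Jensen's inequality gives $\E[\log(1+X_i)]\le\log(1+\E[X_i])$, so a payoff of at least $0$ forces $\E[X_i]\ge 0$, i.e.\ $\E[R_i]\ge r(L_i-F_i)$. When the Player carries no riskfree assets, the balance-sheet identity $E_i+L_i=M_i+F_i$ gives $L_i-F_i=L_i\ge 0$, hence $r(L_i-F_i)\ge 0$ and therefore $\E[R_i]\ge 0$. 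Substituting $\E[R_i]=M_i\bigl(\frac{B}{\sum_j M_j}-c_i\bigr)$ from \cref{cor:EV} (homogeneous Players, $d=1$) and examining the sign shows that, for $M_i>0$, the condition $\E[R_i]\ge 0$ is equivalent to $\sum_j M_j\le B/c_i=Y_i$, that is, to $M_i\le Y_i-\sum_{j\ne i}M_j$. The \emph{in particular} clause is then immediate: if $Y_i\le\sum_{j\ne i}M_j$ the right-hand side is $\le 0$, and since every balance sheet in $\CO$ has $M_i\ge 0$, the only admissible response is $M_i=0$.

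The step that needs care — and what I expect to be the main obstacle — is the interest term $r(L_i-F_i)$ when the Player does hold riskfree assets ($F_i>0$, $L_i=0$), in which case the term is negative and the bare comparison with $(0,0,0,0)$ only yields the weaker $\E[R_i]\ge -rF_i$. I would argue that this case never binds for an upper bound on mining: a Player who simultaneously holds riskfree assets and runs a mining position with $\E[R_i]<0$ is strictly dominated by reallocating that mining capital to the riskfree asset, which pays $r\ge 0$ with zero variance, so by concavity of $\log$ the reallocation raises the mean and lowers the spread of $X_i$ and hence strictly increases the payoff; moreover, by \cref{cor:EV}, $\E[R_i]$ is itself decreasing in $M_i$ once $M_i>Y_i-\sum_{j\ne i}M_j$, which certifies the direction of improvement. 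The only delicacy is the feasibility of the reallocated sheet in $\CO_i$; the guaranteed membership $(0,0,0,0)\in\CO_i$ supplies a feasible fallback, and in the interest-free normalization $r=0$ the subtlety disappears entirely, leaving the null-option comparison already tight.
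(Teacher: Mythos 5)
Your core argument --- compare any candidate best response against the null sheet $(0,0,0,0)$, use Jensen's inequality to pass from $\E[\log(1+X_i)]\ge 0$ to $\E[X_i]\ge 0$, then read off the linear bound from the first moment in Corollary~\ref{cor:EV} --- is exactly the reasoning the paper leaves implicit: this corollary is stated there with no proof at all, as an immediate consequence of Corollary~\ref{cor:EV}. For balance sheets with $F_i=0$ your derivation is complete and correct, and you are right (indeed more careful than the paper) to single out $F_i>0$ as the real obstacle.

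However, your resolution of that case does not close the gap. The reallocation you invoke replaces $(E_i,0,M_i,F_i)$ by the pure-riskfree sheet $(E_i,0,0,E_i)$, but nothing in the hypothesis puts that sheet in $\CO_i$; the only guaranteed alternative is $(0,0,0,0)$, and that fallback does \emph{not} dominate. Concretely, take $\CO_i=\{(0,0,0,0),\,(M+F,0,M,F)\}$ with $M>0$, $\sum_j M_j>Y_i$, and $F$ so large that $rF>c_iM$. Then the risky sheet pays $X_i\ge (rF-c_iM)/(M+F)>0$ in every state, so its expected payoff is strictly positive (approaching $\log(1+r)$ as $F\to\infty$), while the null sheet pays $0$; the unique best response then violates the claimed inequality. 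So when $r>0$, the statement cannot be recovered from the null-option comparison alone --- any honest proof must either assume $\CO_i$ is rich enough to contain the riskfree reallocation (e.g.\ $\CO_i=\CO$, the setting of Theorems~\ref{thm:decisionB} and~\ref{thm:equilibrium}, where your own Jensen comparison $\E[\log(1+X_i)]\le\log\bigl(1+\E[X_i]\bigr)<\log(1+r)$ finishes the job), or take $r=0$ so the interest term vanishes. A smaller point: ``raises the mean and lowers the spread, hence increases the payoff'' is not a valid inference for expected log-utility in general; it should be routed through Jensen against the deterministic riskfree return, as above.
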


We call $Y_i$ the break-even hash rate for Player $i$.  
\hiddenmemo{
If we include time-variance in the Single Nakamoto Game, then the moment function for cost is
\[\CM_{\text{Cost}}(u) = -\frac{u c_i M_i}{\lambda\CM_{\text{Revenue}}(u)}+1.\]
We omit $d_i$ in this note.  We should set $B=0$.  Actually the cost term of the Single Nakamoto Game that corresponds to the naive cost at the Poisson reward model is dependent to $B$.  

This is deduced because $\CG_{N(\Theta)}(\CM_{\text{cost}}(u)) = e^{-uTc_i M_i}$ when running the process from $t=0$ to $t=T$. That is,
\[T\lambda (\CM_{\text{cost}}(u) -1) = {-uTc_i M_i}.\]
The corresponding probability density function is the derivation of the delta function.
\[\rho(x) = (-uc_iM_i/\lambda)\delta'(x)+ \delta(x).\]

The variance will have an additional term.
\[\E[R_i ] = \frac{ B M_i}{\sum_j{M_j}} - c_i M_i\]
\[\V[R_i] = \frac{ B^2 M_i M_{-i}}{(\sum_k{M_k})^2} + c_i^2 M_i^2\]

If we keep this $c_i^2$, then the calculus on $\CB^*$ will not allow practical expression.
We can ignore $c_i^2$ because it is very small, and it makes the calculus simple.
}

\subsection{Nakamoto Game's Reward Model and Poisson Reward Model}

\void{
\begin{prop}
Given balance sheets $\CB_i =(E_i, L_i, M_i, F_i) \in \CO_i$ for all $i$, let $R_i$ be the random variable for the return from the Single Nakamoto Game for Player $i$.  Then,
\[\CM_{R_i}(u) = \CM_{\text{Revenue}}(u) \CM_{\text{Cost}}(u)\]
for 
\[\CM_{\text{Revenue}}(u) = p_i e^{u  B} + (1-p_i) \]
and
\[\CM_{\text{Cost}}(u) = e^{-u c_i M_i/d_i}.\]
\end{prop}
We obtain the proposition because  $\CM_{X+Y}(u) = \CM_X(u)  \CM_Y(u)$ for independent random variables $X, Y$.

\begin{corollary}\label{cor:EV}
 For the return $R_i$ of the Single Nakamoto Game with homogeneous Players,
\[\E[R_i ] = \frac{ B M_i}{\sum_j{M_j}} - c_i M_i\]
and
\[\V[R_i] = \frac{ B^2 M_i M_{-i}}{(\sum_j{M_j})^2}\]
for \(M_{-i}= \sum_{j\ne i}{M_j}\).
\end{corollary}

}

Since the original Bitcoin paper~\cite{nakamoto2009bitcoin}, the reward for miners has been modeled using the Poisson process\cite{Rosenfeld2011}.  We call it the Poisson reward model.

\textbf{Poisson Reward Model}
A player with hash rate $M_i/d_i$ will receive the revenue $B$ according to the Poisson process with $\lambda_i = \frac{M_i}{D d_i}$. 
$D$ is the difficulty parameter, which is adjusted so that $\lambda^{-1} = \tau$ for $\lambda=\frac{H}{D}$, where $H$ is the world hash rate.

We claim that the revenue of the Nakamoto Game with static Players replicates that of the Poisson reward model. 
For a random variable $X$, the moment generating function of $X$ is defined as \(\CM_X(u) = \E[e^{uX}]\).
When $X$ is valued on non-negative integers, the probability generating function of $X$ is defined as
\(\CG_X(s) = \CM_X(\log(s)).\)

\begin{prop}
Suppose static Players play the Nakamoto Game and Players $i$ chooses balance sheet $(E_i, L_i , M_i, F_i)$ for time interval \(t=0\) to \(t=T\).
Let $\CM^i(u)$ be the moment generating function for the sum of the return of player $i$.  Then,
\[\CM^i(u)= e^{T\lambda\left( (p_i e^{u B} + 1-p_i) e^{-u c_iM_i/d_i}-1\right)}\]
\end{prop}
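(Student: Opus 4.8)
The plan is to recognize the total return accumulated over $[0,T]$ as a compound Poisson sum and then apply the standard moment-generating-function identity for such sums. First I would observe that, because the Players are static, the balance sheets---and with them the success probability $p_i$, the per-trigger cost $c_i M_i/d_i$, and the rate $\lambda$---stay fixed throughout the interval. Hence the triggers arrive according to a homogeneous Poisson process $\Theta$ of rate $\lambda$, so the number of triggers $N$ occurring in $[0,T]$ is Poisson-distributed with mean $T\lambda$, and at each trigger the return is an independent copy of the random variable $R_i$ whose moment generating function was computed in the preceding Proposition. Writing the total return as $S_i=\sum_{k=1}^{N}R_i^{(k)}$, with the $R_i^{(k)}$ i.i.d.\ and independent of $N$, the quantity to compute is $\CM^i(u)=\E[e^{uS_i}]$.

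The key step is to condition on $N$ and use independence. Conditioning yields $\E[e^{uS_i}\mid N=n]=\CM_{R_i}(u)^n$, whence
\[\CM^i(u)=\sum_{n=0}^{\infty}\CM_{R_i}(u)^n\,\frac{(T\lambda)^n e^{-T\lambda}}{n!}=e^{T\lambda(\CM_{R_i}(u)-1)}.\]
This is precisely the probability generating function of the Poisson count evaluated at $s=\CM_{R_i}(u)$, in line with the stated relation $\CG_X(s)=\CM_X(\log s)$. Substituting the factorization $\CM_{R_i}(u)=(p_i e^{uB}+1-p_i)\,e^{-u c_i M_i/d_i}$ from the preceding Proposition then gives exactly the claimed expression for $\CM^i(u)$.

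I expect the only delicate point to be justifying the compound-Poisson structure itself---namely, that the summands are i.i.d.\ and independent of the count $N$. This rests on the independent-increments property of the Poisson process $\Theta$ together with the staticness assumption, which guarantees that the conditional distribution of $R_i$ at every trigger is identical (the balance sheet never changes, so $p_i$ and the per-trigger cost are constant, and no summand depends on how many other triggers occur). Once this is in place the remaining work is the routine summation of the exponential series above. A secondary point worth stating explicitly is that, in this model, the cost enters $R_i$ once per trigger rather than continuously in time, so summing $R_i$ over the $N$ triggers is the correct notion of cumulative return and accounts for why the cost factor $e^{-u c_i M_i/d_i}$ appears inside---rather than outside---the Poisson exponential.
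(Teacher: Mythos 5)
Your proposal is correct and follows essentially the same route as the paper: both recognize the total return as a compound Poisson sum and compute $\CM^i(u)=\CG_{N(\Theta)}(\CM_{R_i}(u))=e^{T\lambda(\CM_{R_i}(u)-1)}$ using the moment generating function of $R_i$ from the preceding Proposition. The only difference is that you spell out the conditioning-on-$N$ argument and the exponential series summation that the paper invokes implicitly by citing the compound Poisson distribution.
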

\begin{proof}
Let $N(\Theta)$ be the random variable given by the number of triggers according to $\Theta$.
Since $\Theta$ is the Poisson process, we have
\[\CG_{N(\Theta)}(s) = e^{T\lambda(s-1)}.\]
Every Single Nakamoto Game is independent identical, so the sum of the return follows the compound Poisson distribution, 
\begin{align*}
\CM^i(u) &= \CG_{N(\Theta)}(\CM_{R_i}(u)) \\
&= e^{T\lambda\left( (p_i e^{u B} + 1-p_i) e^{-u c_iM_i/d_i}-1\right)}.
\end{align*}
\end{proof}

We obtain the moment generating function of the Poisson reward model if the cost rate is set $c_i=0$ in the Nakamoto Game.  It implies there is an interpretation from the analysis over the Poisson reward model to those over the Nakamoto Game.
\begin{theorem}\label{thm:poisson}
The random variable for the revenue in the Nakamoto Game by homogeneous static Players exactly coincides with the random variable for the revenue in the Poisson reward model.
\end{theorem}

\begin{proof}
Let $\CM^i_\text{Revenue}(t)=  \CM^i(t)|_{c_i =0}$.  Then 
\begin{align*}
    \CM^i_\text{Revenue}(u)
    &= e^{T \lambda_i( e^{u  B} -1)}
\end{align*}
for $\lambda_i = \frac{M_i}{Dd_i}$.  This coincides with the moment generating function for the revenue at the Poisson reward model.  It implies the random variables for the revenue exactly match because
the moment generating function determines the corresponding probabilistic distribution if exists.
For a moment generating function $\CM$,
$$
\rho_X(x) = \frac{1}{2\pi}\int_\R{e^{-\I u x} \CM(\I u) du}
$$
produces the random variable $X$ whose probability density function is $\rho_X$, and $X$ satisfies $\CM = \CM_X$.
\end{proof}

\void{
Since the original Bitcoin paper~\cite{nakamoto2009bitcoin}, the reward for miners has been modeled using the Poisson process~\cite{Rosenfeld2011}\change{.  We call it}{, which we call} the Poisson reward model.

Poisson Reward Model:
A player with hash rate $M_i/d_i$ will receive the revenue $B$ according to the Poisson process with $\lambda_i = \frac{M_i}{D d_i}$. 
$D$ is the difficulty parameter, which is adjusted so that $\lambda^{-1} = \tau$ for $\lambda=\frac{H}{D}$, where $H$ is the world hash rate.

We claim that the revenue of the Nakamoto Game with static Players replicates that of the Poisson reward model.  For the proof see Appendix \ref{sec:compoundpoisson}. 
\begin{prop}
Suppose static Players play the Nakamoto Game and Players $i$ chooses balance sheet $(E_i, L_i , M_i, F_i)$ for time interval \(t=0\) to \(t=T\).
Let $\CM^i(u)$ be the moment generating function for the sum of the return of player $i$.  Then,
\[\CM^i(u)= e^{T\lambda\left( (p_i e^{u B} + 1-p_i) e^{-u c_iM_i/d_i}-1\right)}\]
In particular if $c_i=0$ then $\CM^i$ coincides with the moment generating function of the revenue in the Poisson reward model.
\end{prop}
\Go{We can omit proof because it is straight forward once one recognize it is a compound Poisson process.}
\void{
\begin{IEEEproof}
Let $N(\Theta)$ be the random variable given by the number of triggers according to $\Theta$.
Since $\Theta$ is the Poisson process, we have
\[\CG_{N(\Theta)}(s) = e^{T\lambda(s-1)}.\]
Every Single Nakamoto Game is independent identical, so the sum of the return follows the compound Poisson distribution, 
\begin{align*}
\CM^i(u) &= \CG_{N(\Theta)}(\CM_{R_i}(u)) \\
&= e^{T\lambda\left( (p_i e^{u B} + 1-p_i) e^{-u c_iM_i/d_i}-1\right)}.
\end{align*}
\end{IEEEproof}
}

\void{
\begin{theorem}\label{thm:poisson}
The random variable for the revenue in the Nakamoto Game by homogeneous static Players exactly coincides with the random variable for the revenue in the Poisson reward model.
\end{theorem}
}
\void{
\begin{IEEEproof}
Let $\CM^i_\text{Revenue}(t)=  \CM^i(t)|_{c_i =0}$.  Then 
\begin{align*}
    \CM^i_\text{Revenue}(u)
    &= e^{T \lambda_i( e^{u  B} -1)}
\end{align*}
for $\lambda_i = \frac{M_i}{Dd_i}$.  This coincides with the moment generating function for the revenue at the Poisson reward model.  It implies the random variables for the revenue exactly match because
the moment generating function determines the corresponding probabilistic distribution if exists.
\Go{we can omit this paragraph}
For a moment generating function $\CM$,
$$
\rho_X(x) = \frac{1}{2\pi}\int_\R{e^{-\I u x} \CM(\I u) du}
$$
produces the random variable $X$ whose probability density function is $\rho_X$, and $X$ satisfies $\CM = \CM_X$.
\end{IEEEproof}
}
}

\section{Decision of Asset Allocation and Finance}
\label{sec:decision}
\void{
This section provides the following theorems.
First, we apply the same method as in \cite{Thorp2006} on the binomial tree of the Nakamoto Game, and obtain the optimal asset allocation decision.  Then, we add an analysis of the optimal financial decision of the miners from the specific constraints of the blockchain mining.   Combining, we obtain observations that predict the reasonable behavior of miners.
}
We assume the Players are homogeneous hereafter.

\begin{theorem}\label{thm:decisionB}
Suppose Player $i$ plays the Nakamoto Game with $\CO_i =\CO$.   Fix Player $j$'s balance sheet $\CB_j$ for $j\ne i$.
Then, there is a unique balance sheet $\CB^*_i= (E^*_i, L^*_i, M^*_i, F^*_i)\in \CO_i$ that maximizes the expected payoff from the stage game of the Nakamoto Game.   Player $i$ maximizes  the expected payoff of the Nakamoto Game by continuously choosing $\CB^*$ for each stage game.  $\CB^*$ is determined by
\(M^*_i = \frac{1}{3}\left(\frac{ B}{c_i+r} - M_{-i}\right).\)
\end{theorem}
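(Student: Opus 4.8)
The plan is to follow Thorp's method for the binomial growth-rate problem, reducing the stage-game objective to a one-dimensional maximization in two separable steps. First I would simplify the per-stage payoff. The balance-sheet identity $E_i + L_i = M_i + F_i$ gives $L_i - F_i = M_i - E_i$ regardless of which side is nonzero, so the interest term is $r(L_i - F_i) = r(M_i - E_i)$ and the return on equity is $X = \bigl(R_i - r(M_i - E_i)\bigr)/E_i$. Using \cref{cor:EV} with homogeneous Players, $\E[X] = r + G/E_i$ and $\V[X] = V/E_i^2$, where $G = \tfrac{B M_i}{S} - (c_i+r)M_i$, $V = \tfrac{B^2 M_i M_{-i}}{S^2}$, and $S = M_i + M_{-i}$. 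As the growth criterion I would take $g = \E[X] - \tfrac12\V[X]$ (the expected log-return in the diffusion approximation adapted from Thorp), so the stage payoff becomes $g(E_i, M_i) = r + G/E_i - V/(2E_i^2)$.

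Next I would optimize in two stages, exploiting that $E_i$ (equivalently the leverage chosen through $L_i$ or $F_i$) and $M_i$ enter separately. Holding $M_i$ fixed and solving $\partial g/\partial E_i = (V - G E_i)/E_i^3 = 0$ gives the optimal equity $E_i^* = V/G$ whenever the position is favorable, i.e. $G>0$; substituting back collapses the objective to $g = r + \tfrac12 G^2/V$, a scaled squared Sharpe ratio. If $G \le 0$ the objective is increasing in $E_i$, so the Player prefers not to mine, recovering the break-even outcome; hence the interior case is exactly $G>0$, equivalently $M_i < \tfrac{B}{c_i+r} - M_{-i}$.

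It then remains to maximize $G^2/V$ over $M_i$. Writing $k = c_i + r$ and $A = B - k M_{-i}$, a direct simplification gives $G^2/V = M_i (A - k M_i)^2/(B^2 M_{-i})$, so I would maximize $\varphi(M_i) = M_i (A - k M_i)^2$. Its derivative factors as $\varphi'(M_i) = (A - k M_i)(A - 3 k M_i)$, whose only interior critical point is $M_i = A/(3k)$, that is $M^*_i = \tfrac13\!\left(\tfrac{B}{c_i+r} - M_{-i}\right)$; the other root $M_i = A/k$ sits on the boundary of the favorable region, where $\varphi = 0$, so the maximizer is unique (and $M^*_i = 0$ when $A \le 0$). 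With $M^*_i$ fixed, $E^*_i = V/G$ is determined, and then $L^*_i, F^*_i$ follow from $L_i - F_i = M_i - E_i$ together with the rule that $L=0$ if $F\neq 0$; hence the optimal balance sheet $\CB^*_i$ is unique.

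Finally, for the repeated game I would use that the payoff of the Nakamoto Game is the sum of the stage payoffs and that, with every other Player's balance sheet fixed, the stage games are independent, identically distributed, and have a stage objective that does not depend on the Player's accumulated equity. Therefore the sum is maximized term by term by choosing $\CB^*_i$ in every stage. The main obstacle I anticipate is justifying the growth criterion $g = \E[X] - \tfrac12\V[X]$ as the correct and tractable surrogate for $\E[\log(1+X)]$ on the binomial tree — controlling the approximation and ensuring the log-arguments remain positive at the optimum — together with verifying that the separated two-stage optimization locates the global maximum of the joint $(E_i, M_i)$ problem rather than a saddle point.
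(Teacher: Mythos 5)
Your proposal is correct and follows essentially the same route as the paper: first optimize the equity/leverage for fixed Mining Assets (the paper's Proposition~4, via $f^* = (\mu-r)/\sigma^2$, which is exactly your $E_i^* = V/G$ in different variables), collapsing the stage payoff to $r + \tfrac12 S_i^2$ up to constants, and then maximize the squared Sharpe ratio $S_i^2 = G^2/V$ over $M_i$, whose critical-point equation gives $M^*_i = \tfrac{1}{3}\left(\tfrac{B}{c_i+r} - M_{-i}\right)$. The one obstacle you flag --- justifying the mean--variance surrogate for $\E[\log(1+X_i)]$ --- is precisely what the paper resolves in its Appendix~\ref{appendix:approx}, where the approximation error is shown to be $O((\mu-r)^3)$ under $f = O(\mu - r)$.
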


\begin{theorem}\label{thm:equilibrium}
Suppose $(m+n)$ Players play the Nakamoto Game, and let $I= \{1,2,\cdots,m\}$ and $K = \{m+1, m+2,\cdots, m+n\}$.    Suppose that for $i\in I$,  Players $i$  tries to maximize the expected payoff of the Nakamoto Game with $\CO_i = \CO$.  For $k\in K$, Player $k$ has only choice of the balance sheet with $\CO_k =\{ \CB_k\}$.
Then, there is an equilibrium point $\CBH_i=(\EH_i, \LH_i, \MH_i, \FH_i)$ among Players $i \in I$ in which $\CBH_i$ is determined by
\[\MH_i = \left(  \frac{1}{c_i + r} - \frac{1}{m+2}\sum_{j\in I}{ \frac{1}{c_j + r}}\right)\frac{B}{2}- \frac{Z}{m+2}\]
for $Z = \sum_{k\in K}{M_k}$. 
\hiddenmemo{Fuhito: Is $Z$ here the same thing as $M_0$ in Section 3.2? If so, how about using $M_0$ throughout? Go: Yes. $M_0$ is introduced for the convenience of notation, but it is not good for the theorem statement.  If you know better notation please fix it. \textbf{Fuhito}: I actually thought $M_0$ is fine. Another possibility is to write $\bar M$ --- the bar notation often times indicates that it's an exogenously fixed constant (at least in economics writing style)
\textbf{Go:I replaced $M_0$ with $Z$, and the exogenous hash rate is now formulated as those from exogenous players.  I think it is more realistic, and technically $M_0$ is inconvenient to describe the index of players when we take the sum. The poor exogenous players do not have freedom to choose the zero BS $(0,0,0,0)$ for now.  We could improve the analysis in the future work so that we can discuss the entrance/leave of players.}
}
\end{theorem}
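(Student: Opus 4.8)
The plan is to treat the equilibrium as a simultaneous fixed point of the best-response map that \cref{thm:decisionB} already supplies. Since $\CO_i = \CO$ for every $i\in I$ and the static Players in $K$ contribute a fixed aggregate $Z = \sum_{k\in K} M_k$, \cref{thm:decisionB} tells me that, holding all other balance sheets fixed, Player $i$'s unique optimal mining assets are $M_i^* = \tfrac{1}{3}\bigl(\tfrac{B}{c_i+r} - M_{-i}\bigr)$, where now $M_{-i} = \sum_{j\in I,\, j\ne i}\MH_j + Z$ collects both the optimizing rivals and the static Players. An equilibrium is precisely a profile $(\MH_i)_{i\in I}$ at which each of these best-response identities holds simultaneously, so the task reduces to solving the resulting linear system and then reading off $\MH_i$.

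First I would rewrite the best-response identity in aggregate form. Writing $S = \sum_{j\in I}\MH_j$ for the total mining assets of the optimizing Players, the fixed-point condition $3\MH_i = \tfrac{B}{c_i+r} - (S - \MH_i) - Z$ simplifies to the linear relation $2\MH_i + S = \tfrac{B}{c_i+r} - Z$ for each $i\in I$. In matrix form this reads $(2\,\mathrm{Id} + \mathbf{1}\mathbf{1}^{\top})(\MH_i)_i = b$ with $b_i = \tfrac{B}{c_i+r} - Z$; since $2\,\mathrm{Id}+\mathbf{1}\mathbf{1}^{\top}$ has eigenvalues $2$ (multiplicity $m-1$) and $m+2$ (simple), it is invertible, so the system has a unique solution, which settles existence and uniqueness of the candidate equilibrium.

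Next I would solve the system explicitly. Summing $2\MH_i + S = \tfrac{B}{c_i+r} - Z$ over $i\in I$ gives $(m+2)S = B\sum_{j\in I}\tfrac{1}{c_j+r} - mZ$, hence $S = \tfrac{1}{m+2}\bigl(B\sum_{j\in I}\tfrac{1}{c_j+r} - mZ\bigr)$. Forming $S+Z = \tfrac{1}{m+2}\bigl(B\sum_{j\in I}\tfrac{1}{c_j+r} + 2Z\bigr)$ and substituting into $2\MH_i = \tfrac{B}{c_i+r} - S - Z$, then dividing by $2$, yields exactly the claimed closed form $\MH_i = \bigl(\tfrac{1}{c_i+r} - \tfrac{1}{m+2}\sum_{j\in I}\tfrac{1}{c_j+r}\bigr)\tfrac{B}{2} - \tfrac{Z}{m+2}$.

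The routine algebra above is not where the difficulty lies. The main obstacle I anticipate is justifying that the interior best-response formula of \cref{thm:decisionB} is the one that actually governs the fixed point: I must rule out corner responses $\MH_i = 0$ and confirm that the resulting balance sheet $\CBH_i$ is feasible (in particular $\MH_i \ge 0$, with the accompanying $\EH_i, \LH_i, \FH_i$ respecting the constraints $E+L=M+F$ and $L=0$ if $F\ne 0$ defining $\CO$). This needs a parameter condition guaranteeing each $\MH_i$ in the closed form is nonnegative; if some Player's break-even hash rate is too small relative to the aggregate, that Player drops to $\MH_i=0$ and the system must be re-solved on the active set, so a complete treatment should either restrict to the interior regime or characterize the active set. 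A secondary point to verify is that, because \cref{thm:decisionB} shows the per-stage optimum is chosen identically in every stage, a fixed point of the stage-game best responses is automatically an equilibrium of the full repeated Nakamoto Game.
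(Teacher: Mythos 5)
Your proposal is correct and takes essentially the same route as the paper: the paper's Appendix~\ref{sec:sharedetail} likewise treats the equilibrium as a simultaneous fixed point of the best-response identity $\MH_i = \frac{1}{3}\left(Y'_i - \MH_{-i}\right)$ from \cref{thm:decisionB}, sums it over $i \in I$ to solve for the aggregate, and substitutes back to obtain the stated closed form. Your extra observations (invertibility of $2\,\mathrm{Id}+\mathbf{1}\mathbf{1}^{\top}$ for uniqueness of the solution, and the caveat about corner responses $\MH_i = 0$ and balance-sheet feasibility) go slightly beyond what the paper itself verifies, since its proof is just the interior algebra.
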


\subsection{Optimal Asset Allocation}\label{OptimalF}
\void{
Let \(X\) be a random variable for return from biased coin-flipping processes.
The Taylor series for \(\log(1+x)\) around \(x=0\) gives an
approximation,
\begin{align*}\E[\log(1+X)] &= \E[ X - X^2/2 + \cdots]\\&=  \E[X] - \frac{\V[X]}{2} + O(\E[X]^2).
\end{align*}
}

\Go{State explicitly there is noting new about quantitative finance, and downsize this section.  The purpose of this section is preparing the standard notation for readers because Thorp's context is not necessarily common knowledge.}

\begin{prop}
Let \(W_i = R_i/M_i\), the random variable for the return rate
over Mining Assets from the \change{Single Nakamoto Game}{\NRP} with Mining Asset $M_i$ for Player $i$.
For given $\CB_i =(E_i, L_i, M_i, F_i)$,
the payoff of the stage game of the Nakamoto Game is given
by $\log(1+X_i)$ for
$X_i = (1-f)r + f W_i,$
\change{}{where} \(f\) is the
leverage rate \change{$M_i = f E_i$}{$f = M_i / E_i$}.
\end{prop}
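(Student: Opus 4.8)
The plan is to verify the claimed identity by direct substitution, the only substantive ingredient being the balance-sheet constraint $E_i + L_i = M_i + F_i$ that holds for every $\CB_i \in \CO$. Recall that the stage-game payoff of the Nakamoto Game was defined as $\log\left(1 + \frac{R_i - r(L_i - F_i)}{E_i}\right)$, so it suffices to show that the bracketed return rate equals $X_i = (1-f)r + f W_i$ with $f = M_i/E_i$.

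First I would split the return rate into a mining term and a financing term, $\frac{R_i - r(L_i - F_i)}{E_i} = \frac{R_i}{E_i} - \frac{r(L_i - F_i)}{E_i}$. For the mining term, substituting the definition $W_i = R_i/M_i$ together with $M_i = f E_i$ gives $\frac{R_i}{E_i} = \frac{M_i W_i}{E_i} = f W_i$ immediately.

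The financing term is where the balance-sheet identity enters. Rewriting the constraint $E_i + L_i = M_i + F_i$ as $F_i - L_i = E_i - M_i$ and dividing by $E_i$ yields $\frac{F_i - L_i}{E_i} = 1 - \frac{M_i}{E_i} = 1 - f$. Hence $-\frac{r(L_i - F_i)}{E_i} = r\,\frac{F_i - L_i}{E_i} = (1-f)r$. Adding the two terms produces $X_i = f W_i + (1-f)r$, as claimed.

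The argument is purely algebraic, so I do not expect any real obstacle; the only point requiring care is to invoke the balance-sheet identity at the right place, since it is precisely what converts the net interest payment $r(L_i - F_i)$ into the riskfree return $(1-f)r$ earned on the non-mining fraction of equity. One should also note implicitly that $E_i > 0$ is needed for the leverage rate $f$ to be well defined.
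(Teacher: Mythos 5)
Your proof is correct, and it rests on the same key fact as the paper's: the balance-sheet identity $E_i + L_i = M_i + F_i$ is what converts the net interest payment $r(L_i - F_i)$ into the riskfree return $(1-f) r E_i$ earned on the non-mining fraction of equity. The difference is one of execution. The paper's proof splits into cases according to the complementarity constraint ($L_i = 0$ if $F_i \ne 0$): for $f > 1$ it sets $F_i = 0$, $L_i = (f-1)E_i$ and computes the interest paid; for $0 < f < 1$ it sets $L_i = 0$, $F_i = (1-f)E_i$ and computes the interest received; both cases yield $(1-f) r E_i$. You avoid the case split entirely by observing that the accounting identity alone gives $F_i - L_i = E_i - M_i$, hence $(F_i - L_i)/E_i = 1 - f$, regardless of which side of the balance sheet vanishes. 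This is a mild but genuine streamlining: it shows the complementarity condition is irrelevant to this proposition (it pins down $L_i$ and $F_i$ individually but not their difference, which is all that enters the payoff), and it treats $f<1$, $f=1$, and $f>1$ uniformly. What the paper's case analysis buys is only a more explicit financial reading --- borrow at rate $r$ when levered, lend when under-levered --- while your version is more complete in that it also carries out the final combination $R_i/E_i = f W_i$ explicitly, which the paper leaves implicit. Your closing remark that $E_i > 0$ is needed for $f$ to be well defined is likewise a point the paper passes over silently.
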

\void{
\begin{IEEEproof}
The player will receive (or pay if negative) the risk-free interest $r(F_i - L_i)$.
Recall $E_i + L_i = M_i + F_i$.
If $f>1$, then $F_i=0$ and $L_i = M_i - E_i =(f-1) E_i$. Player will receive $-r L_i = (1-f)r E_i$.
If $0<f<1$, then $L_i=0$ and $F_i = E_i- M_i = (1-f)M_i$.  Player will receive $r F_i = (1-f) r E_i$
\end{IEEEproof}
}

The expected payoff of the stage game is approximated as
\begin{align*}\E[\log(1+X_i)] &= r - \frac{r^2}{2} + f(\E[W_i] - r) - \frac{f^2 \V[W_i]}{2} + O(\E[W_i]^3).\end{align*}
The proof is in Appendix \ref{appendix:approx}.
We use more accurate approximation formula than described in \cite{Thorp2006} because the optimal $f$ is $O(\E[W_i])$ in our situation.
\void{We can apply the same formula \change{works for}{can be applied to} the binomial tree models to estimate the payoff of the stage games of the Nakamoto Games. }
\hiddenmemo{Fuhito: If we take seriously that, in Nakamoto Game, players engage in repeated games, it's unclear that the equilibrium identified in this paper is the unique equilibrium. (Recall our earlier discussion on ``Folk theorem''). 
\textbf{Go: We would like to make it explicit that the Nakamoto Game is played with an arbitrary, but finite time interval.  I agree that it is an independent interesting topic to analyze the equilibrium of miners in mining process that are assumed to continues forever or with indefinite termination.}}

Let \(g_\infty(f)= r - \frac{r^2}{2}+ f(\mu - r) - \frac{f^2 \sigma^2}{2}\) for \(\mu = \E[W_i]\)
and \(\sigma^2 = \V[W_i]\).
The player would like to maximize \(g_\infty(f)\). Solving
\(g_\infty'(f) = 0\), \(g_\infty(f)\) attains the maximum at \(f= f^*\)
for
\(f^* = \frac{\mu-r}{\sigma^2},\)\label{eqn:optf}
and
\(g_\infty(f^*) = \frac{S^2}{2} + r - \frac{r^2}{2}\)
for \(S = \frac{\mu-r}{\sigma}\). \(S\) is called the Sharpe ratio of \(W_i\).
Thus we obtained the following proposition.

\begin{prop}\label{prop:uniquebalance}
Suppose Player $i$ seeks the optimal balance sheet and fix the balance sheet $\CB_j$ for every $j \neq i$.
(1) Given Mining Assets $\tilde{M}$ there is a unique balance sheet $\CB_i \in \CO(\tilde{M})$ that maximizes the expected payoff of the stage game of the Nakamoto Game with the balance sheet chosen from $\CO(\tilde{M})= \{(E,L,M,F)\in \CO| M = \tilde{M}\}$. 
(2)
The maximal expected payoff of the stage game of the Nakamoto Game for Player~$i$ with balance sheet $\CB_i$ is approximated by $ {S_i^2/2 + r - \frac{r^2}{2}}$.  $S_i$ is the Sharpe ratio of the return rate of the \change{Single Nakamoto Game}{\NRP},  namely
$S_i = \frac{\E[W_i]-r}{\sqrt{\V[W_i]}}.$
\end{prop}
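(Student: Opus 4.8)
The plan is to reduce the two-part claim to a one-dimensional optimization in the leverage $f = M_i/E_i$, exploiting the preceding proposition, which shows that once $M_i$ is fixed the stage-game payoff $\E[\log(1+X_i)]$ with $X_i = (1-f)r + fW_i$ depends on the balance sheet only through $f$. First I would establish that, for a fixed positive $\tilde M$, the assignment $(E,L,M,F) \mapsto f = \tilde M/E$ is a bijection from $\CO(\tilde M)$ onto $(0,\infty)$: given any $f$, the equity is forced to be $E = \tilde M/f$, and the balance-sheet identity $E + L = M + F$ together with the constraint that $L$ and $F$ cannot both be positive forces $(L,F) = (\tilde M(1 - 1/f),\, 0)$ when $f \ge 1$ and $(L,F) = (0,\, \tilde M(1/f - 1))$ when $f \le 1$. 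Hence optimizing the payoff over $\CO(\tilde M)$ is exactly optimizing a function of the single variable $f$.

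For part~(1) I would then argue uniqueness of the optimal $f$. Since $W_i$ is determined once $M_i = \tilde M$ and the opponents' sheets are fixed, $\mu = \E[W_i]$ and $\sigma^2 = \V[W_i]$ are constants, and by \cref{cor:EV} the variance $\sigma^2 = B^2 M_{-i}/(\tilde M(\sum_j M_j)^2)$ is strictly positive whenever $M_{-i} > 0$. Thus $g_\infty(f) = r - r^2/2 + f(\mu - r) - f^2\sigma^2/2$ is a strictly concave quadratic, so its stationary point $f^* = (\mu - r)/\sigma^2$ is the unique global maximizer. When $\mu > r$ (the regime in which mining is worthwhile) we have $f^* > 0$, so it lies in the feasible range and, through the bijection above, determines a unique balance sheet $\CB^*_i$ with $E^*_i = \tilde M/f^*$ and $(L^*_i,F^*_i)$ read off from the two regimes. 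One may alternatively bypass the approximation and note that the exact objective $p_i \log(1+(1-f)r + f(B/\tilde M - c_i)) + (1-p_i)\log(1+(1-f)r - fc_i)$ is strictly concave in $f$ on the region where both logarithms are defined, which gives uniqueness directly.

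For part~(2) I would simply substitute $f^*$ into $g_\infty$, which was already carried out in the text preceding the statement: $g_\infty(f^*) = S^2/2 + r - r^2/2$ with $S = (\mu - r)/\sigma$, and $S = S_i$ is precisely the Sharpe ratio of $W_i$. Because the expansion $\E[\log(1+X_i)] = g_\infty(f) + O(\E[W_i]^3)$ (proved in the appendix) agrees with $g_\infty$ to the relevant order and the optimal $f^*$ is itself of order $\E[W_i]$, the maximal expected payoff is approximated by $S_i^2/2 + r - r^2/2$, as claimed.

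The main obstacle I anticipate is the handling of the feasible domain and the boundary: the reduction to $f$ is clean, but one must check that the maximizer is interior to $(0,\infty)$ rather than a boundary point. This is where the sign condition $\mu > r$ enters --- if $\mu \le r$ the unconstrained vertex $f^*$ is nonpositive and the supremum over feasible sheets is only approached as $E \to \infty$ (that is, $f \to 0$) and is not attained, so the uniqueness claim should be read within the profitable regime $\mu > r$. A secondary technical point is ensuring that the higher-order remainder in the log-expansion does not displace the true maximizer from $f^*$ enough to matter, which is controlled by the observation that $f^*$ is of order $\E[W_i]$.
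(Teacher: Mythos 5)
Your proposal is correct and takes essentially the same route as the paper: reduce the choice over $\CO(\tilde M)$ to the single leverage variable $f = \tilde M / E$, maximize the quadratic approximation $g_\infty(f)$ at $f^* = (\mu - r)/\sigma^2$, read the unique balance sheet off the two regimes $f^* \ge 1$ (Liabilities, $F=0$) and $f^* \le 1$ ($L=0$, Riskfree Assets), and substitute to get $g_\infty(f^*) = S_i^2/2 + r - r^2/2$ for part (2). Your caveat that attainment requires $\mu > r$ is sound and in fact handled even less carefully in the paper's own (unpublished) proof, which simply declares the $f^* < 0$ case equivalent to $f^* = 0$; so nothing in your argument diverges in substance.
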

\void{
\begin{IEEEproof}
Set $M_i = \tilde{M}$ and find $f^* = (\E[W_i]-r)/\V[W_i]$.  If $0<f^*<1$, then $L_i=0$ and $E_i= M_i/f^*$, so  $F_i = E_i- M_i$.
If $1<f^*$, then $F_i=0$ and $E_i=  M_i/ f^*$, so $L_i = M_i-E_i $.
If $f^*<0$, the mining will not be profitable, so the balance sheet is the same as the one when $f^*=0$.  
\end{IEEEproof}
}


\subsection{Equilibrium among Players}
Suppose $(m+n)$ Players play the Nakamoto Game.  Let $I= \{1,2,\cdots,m\}$ and $K = \{m+1, m+2,\cdots, m+n\}$.    Suppose that for $i\in I$,  \change{Players}{Player} $i$  tries to maximize the expected payoff of the Nakamoto Game with $\CO_i = \CO$.  For $k\in K$, Player $k$ has only choice of the balance sheet with $\CO_k =\{ \CB_k\}$.
Let $Z =\sum_{k\in K}{M_k}$, $H = \sum_{j\in I}{M_{j}} + Z$ and $M_{-i} = -M_i + H$ for $i\in I$.
The Sharpe ratio of the return rate of Player \(i\) in the \change{Single
Nakamoto Game}{\NRP} is
\[S_i = \frac{1 - \frac{c_i + r}{ B}(M_i  + M_{-i})}{\sqrt{M_{-i}/M_{i}}}\]
as Corollary \ref{cor:EV} implies
\(\E[W_i] =  \frac{ B}{H}-c_i,
\V[W_i] = \frac{ B^2 M_{-i}}{M_i H^2} .\)

By solving \(\frac{\partial S_i}{\partial M_i}=0\), we obtain \(M_i^*\) that maximizes \(S_i\) by
\(M_i^* = \frac{1}{3}\left(Y'_i - M_{-i}\right)\)
for \(Y'_i = \frac{B}{c_i +r}\).
By Proposition \ref{prop:uniquebalance}, there exists $\CB_i^*=(E_i^*, L_i^*, M_i^*, F_i^*) \in \CO(M_i^*)$ that maximizes the expected payoff for choice of balance sheets in $\CO(M_i^*)$ for all $i\in I$.  Each $\CB_i^*$ achieves the maximal expected payoff for any choice of balance sheets in $\CO_i$ because it maximizes $S_i$.
\Aron{why is this here? if this was a proof of \cref{thm:decisionB}, then this should have been placed right after \cref{thm:decisionB} \textbf{Go: We need the previous section to complete the proof, so I prefer to place the proof here to make a style for reading. Your comment make sense if you prefer a rigid style.  If you do not like this style, please let me know.}}This proves Theorem \ref{thm:decisionB}.

\change{The}{Using the} formula of $M_i^*$ in Theorem \ref{thm:decisionB} 
\change{implies \change{}{that} if every Player in $I$ tries to maximize the payoff of the Nakamoto Game, then
it causes a \Aron{decrease compared to what?}decrease of the payoff for all the Players.   We}{we} find the equilibrium point
\(\MH_i = \frac{1}{2}\left(  Y'_i - \frac{1}{m+2}\sum_{j\in I}{ Y'_j}\right) -\frac{Z}{m+2}\)
from the fixed point of the maximizing condition of $S_i$ for each $i\in I$.  Namely, 
\(\MH_i = \frac{1}{3}\left(Y'_i - \MH_{-i} \right)\)
for $\MH_{-i} = -\MH_i + \sum_{j \in I}{\MH_j} + Z$.
This concludes Theorem \ref{thm:equilibrium}.  This equation also gives the share of the world hash rate for each Player.
See Appendix \ref{sec:sharedetail} for the proof.

\begin{corollary}\label{cor:share}
Let $\HH = \sum_{i\in I}{\MH_i } + Z$\change{,}{ be} the world hash rate at the equilibrium, and $Y'_i = \frac{B}{c_i+r}$.
Then, 
$$\HH = \frac{1}{m+2}\sum_{i \in I}{Y'_i} + \frac{2}{m+2}Z,\\
\text{and}\
\frac{\MH_i}{\HH} = \frac{1}{2}\left( \frac{Y'_i}{\HH} - 1 \right).$$
In particular, each Player's share of the world hash rate is decided only by the cost rate without explicitly depending on $m$ if the world hash rate at the equilibrium is given. 
\end{corollary}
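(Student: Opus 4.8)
The plan is to obtain both identities by purely algebraic manipulation of the closed-form equilibrium expression already established in Theorem~\ref{thm:equilibrium}, so I do not anticipate any genuine obstacle; the only care needed is in bookkeeping the summations over $I$. Writing the shorthand $Y'_j = B/(c_j+r)$ and $\bar{Y} := \sum_{j\in I} Y'_j$, Theorem~\ref{thm:equilibrium} gives $\MH_i = \tfrac12\bigl(Y'_i - \tfrac{1}{m+2}\bar{Y}\bigr) - \tfrac{Z}{m+2}$ for each $i\in I$. I would keep this decomposition into the $i$-dependent term $\tfrac12 Y'_i$ and the common term $-\tfrac{\bar{Y}}{2(m+2)} - \tfrac{Z}{m+2}$, since that split is what makes both parts of the corollary fall out cleanly.

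For the first identity I would sum the equilibrium expression over all $m$ indices $i\in I$. The $i$-dependent terms contribute $\tfrac12\bar{Y}$, while the common term is repeated $m$ times, giving $\sum_{i\in I}\MH_i = \tfrac12\bar{Y} - \tfrac{m}{2(m+2)}\bar{Y} - \tfrac{mZ}{m+2}$. Using $1 - \tfrac{m}{m+2} = \tfrac{2}{m+2}$ this collapses to $\sum_{i\in I}\MH_i = \tfrac{\bar{Y}}{m+2} - \tfrac{mZ}{m+2}$. Adding the exogenous hash rate $Z$ and combining the $Z$-terms via $(m+2) - m = 2$ then yields $\HH = \sum_{i\in I}\MH_i + Z = \tfrac{1}{m+2}\sum_{i\in I} Y'_i + \tfrac{2}{m+2}Z$, which is the claimed formula for $\HH$.

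The key observation for the second identity is that the common part of the equilibrium expression is exactly $-\tfrac12\HH$. Indeed, from the formula for $\HH$ just derived, $\tfrac12\HH = \tfrac{\bar{Y}}{2(m+2)} + \tfrac{Z}{m+2}$, which is precisely the negative of the $i$-independent term in $\MH_i$. Substituting this back collapses the equilibrium expression to the simple form $\MH_i = \tfrac12\bigl(Y'_i - \HH\bigr)$, and dividing through by $\HH$ gives $\MH_i/\HH = \tfrac12\bigl(Y'_i/\HH - 1\bigr)$, the second identity. The final ``in particular'' assertion is then immediate: treating $\HH$ as a given quantity, the right-hand side $\tfrac12(Y'_i/\HH - 1)$ depends on Player~$i$ only through its cost rate via $Y'_i = B/(c_i+r)$, with no remaining explicit dependence on the number $m$ of optimizing players.
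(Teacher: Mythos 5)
Your computations are correct: summing the closed form of Theorem~\ref{thm:equilibrium} over $i\in I$ does give $\sum_{i\in I}\MH_i = \frac{1}{m+2}\sum_{j\in I}Y'_j - \frac{m}{m+2}Z$, hence the stated formula for $\HH$, and the $i$-independent part of the closed form is indeed $-\frac{1}{2}\HH$, which collapses the expression to $\MH_i = \frac{1}{2}(Y'_i - \HH)$ and yields the share formula. However, your route is organized differently from the paper's. The paper (Appendix~\ref{sec:sharedetail}) does not take the closed form of Theorem~\ref{thm:equilibrium} as its starting point; it proves the theorem and the corollary simultaneously from the best-response fixed-point condition $\MH_i = \frac{1}{3}\bigl(Y'_i - \sum_{j\in I,\, j\ne i}\MH_j - Z\bigr)$, which comes from $\partial S_i/\partial M_i = 0$. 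Summing that condition over $i\in I$ gives the same aggregate identity and hence the $\HH$ formula; then rewriting the same condition as $\bigl(1-\tfrac{1}{3}\bigr)\MH_i = \tfrac{1}{3}(Y'_i - \HH)$ produces $\MH_i = \tfrac{1}{2}(Y'_i - \HH)$ in one line, with the closed form of the theorem obtained afterwards as a byproduct. So the paper never needs to ``spot'' the cancellation of the common term against $-\tfrac{1}{2}\HH$, whereas you do; in exchange, your argument is logically modular --- the corollary follows from the statement of Theorem~\ref{thm:equilibrium} alone, without reopening the equilibrium condition. Both arguments hinge on the same two summation identities, so the difference is one of organization and logical dependency rather than of substance.
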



\void{
\begin{corollary}\label{cor:capacity}
If $c_i = c$ for all $i\in I$, then  $\MH_i/\HH <1/m$, so
$$\frac{\HH}{Y'}> \frac{m}{m+2}$$
for $Y' = Y'_i$.
\end{corollary}
}

\section{Implications in Practice}
\label{sec:implications}

\subsection{Example}\label{sec:example}
\Go{Downsize this section but keep the content}
As of February 2020, the real Bitcoin mining environment has the parameters as below.
\begin{enumerate}
    \item The world hash rate is about \change{$110,000,000$}{$1.1\times10^8$} TH/s.
    \item Bitcoin price is about \change{$9500$}{9,500} USD.
    \item Mining reward is $12.5$ BTC.
    \item The average of time intervals between block arrivals is about $10$ minutes.
    \item An example of the latest mining device is Anteminer S17+.  It costs about $2,200$ USD, including the power supply unit, and it generates about $73$ TH/s consuming $2900$W power.
    \item \Aron{most mining is outside of US, so this is not the most relevant data \textbf{Go: I think this is OK because we do not need a precise number. If not please propose modification.}} Electric generation charge is about $0.085$ USD per kWh.
    \item US 10-year Treasury Rate is $1.3\%$.  We ignore it because it is small compared with other costs and returns.
\end{enumerate}

Suppose \change{}{that} you are going to start a mining factory that mines 1 out of \change{}{every} 1000 new blocks.  
If the mining business is break-even, the mining cost per hash rate (TH) is about \Aron{express small and large numbers using exponents \textbf{Go: I feel strange to write financial numbers in scientific notation.  I will follow your guidance if you think this is normal in this area.}}$9500 \cdot 12.5 / (1.1\times 10^{8}) = 1.1 \times 10^{-3}$ USD.   We estimate the cost rate of $c_i$ is $80\%$ of the break-even point.
\void{
This estimation is in the reasonable range because the electric generation cost for the mining per hash rate (TH) is about \change{2900/1000\cdot 0.085\cdot (10/60) /73=}{$ 5.6\times 10^{-4}$} USD, so the electric cost is about $52\%$ of the break-even point.  
The depreciation of the mining device in 3 years per hash rate (TH) is about \change{2200/(3\cdot 365.25\cdot 24\cdot (60/10))/73 =}{$ 1.9\times 10^{-4}$} USD, so it is about $18\%$ of the break-even point. 
}

The world \change{Mining Assets}{mining assets} is $(1.1\times 10^{8})/73 \cdot 2200 = 3.3 \times 10^{9}$\Aron{again, this number is unreadable, use scientific number notation \textbf{Go:Updated for readability and space saving. thank you.}} USD assuming the homogeneous Players.
Your \change{Mining Assets}{mining assets} will be $(3.3 \times 10^{9})/(1-0.001)\cdot 0.001= 3.3 \times 10^{6}$ USD. It is equivalent to about $1500$ units of Anteminer S17+.
The return rate over your \change{Mining Assets}{mining assets} when mining is successful is $u =  (9500 \cdot 12.5 - (1.1 \times 10^{8}) \cdot 0.001 \cdot (1.1 \times 10^{-3}) \cdot 0.8) / (3.3 \times 10^{6}) = 3.6 \times 10^{-2} $, and for unsuccessful mining $d = ( - (1.1 \times 10^{8}) \cdot 0.001 \cdot (1.1 \times 10^{-3} \cdot 0.8) / (3.3 \times 10^{6}) = - 2.9 \times 10^{-5} $.  Applying to $f^* = \frac{u p + d(1-p)}{(u^2p + d^2(1-p)) - (up + d(1-p))^2}$, $f^*$ is about $5.6$.

\void{
A result of numerical simulation is shown in Figure \ref{fig:my_label2}.  The figure shows that choosing the optimal $f^*$ maximizes the upside gain and minimizes the downside loss.
\begin{figure}
    \centering
    \includesvg[width=8cm]{Logreturn-Simulation-Recorrected.svg}
    \caption{Profitability of the 0.1\%-success Bitcoin miner}
    \label{fig:my_label2}
\end{figure}
}
The optimal log-return rate is about $2 \times 10^{-5}$ per $10$ minutes on average.  This means the annualized return is about 180\%.
$f^*=5.6$ means you should start with about 600,000~USD for Equity, 2,700,000 USD for Liabilities. All the assets are allocated for Mining Assets, 3,300,000 USD.

\subsection{Larger is Not Necessarily Better}
Suppose that a player $i$ needs to achieve a given probability $p_i$ of successful mining and \change{optimally chooses}{chooses the optimal} $f^*$ under that constraint. 
\void{The relationship between  $f^*$ and $p_i$ under the environment of the previous subsection is  shown in Figure \ref{fig:my_label3}.
\begin{figure}
    \centering
    \includegraphics[width=6.8cm]{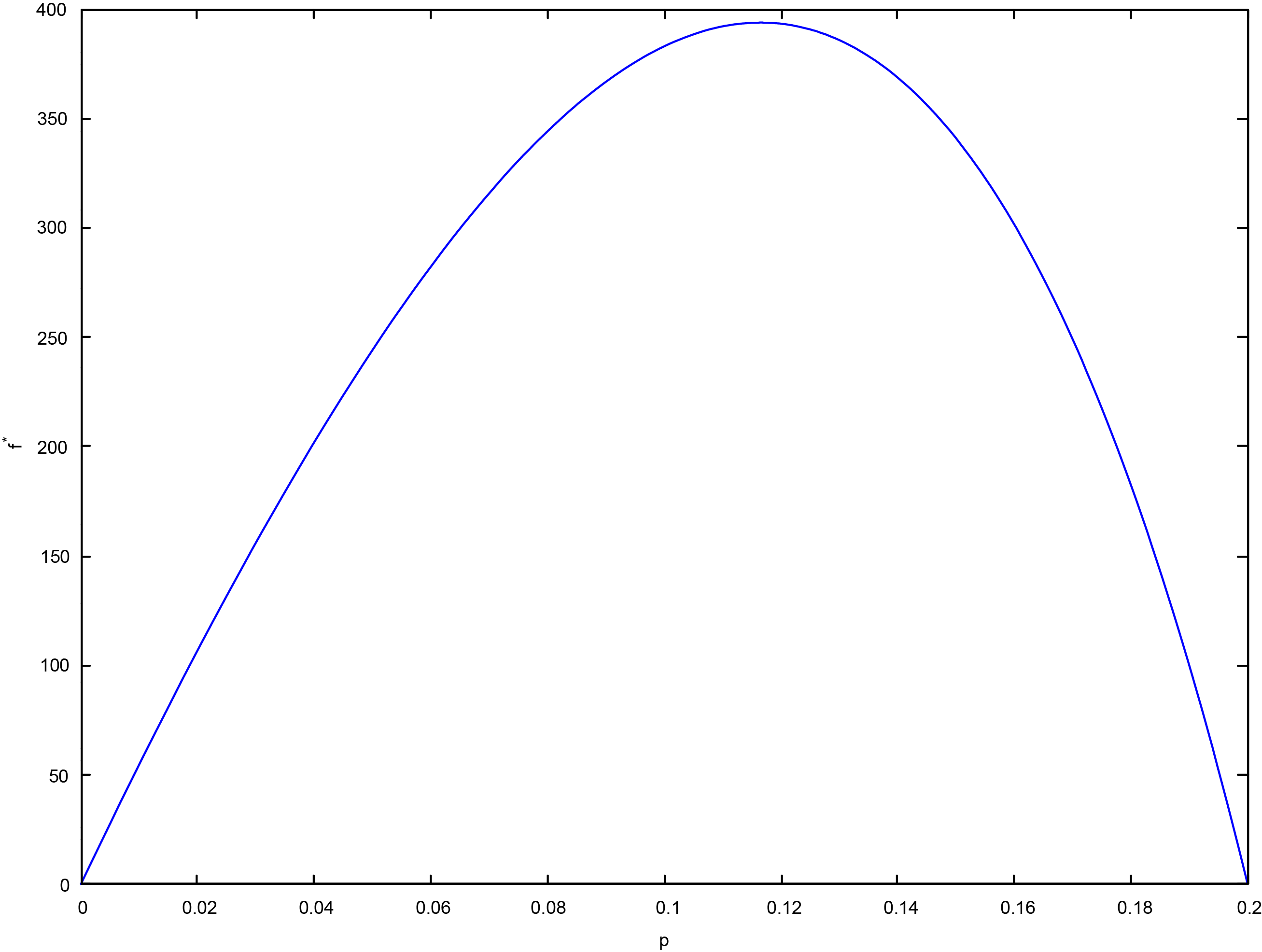}
    \caption{ The optimal $f^*$ for the success probability of mining}
    \label{fig:my_label3}
\end{figure}
\Go{ Figure 3 and Figure 4 will be removed for the space constraint.}
 We observe that }
 $f$ increases when $p_i$ is small, but decreases for larger $p_i$ and drops to $0$ at $p_i =0.2$: if we add more than $20\%$ of the world hash rate, then the Nakamoto Game becomes unprofitable because the world hash rate exceeds the break-even point.
\void{The log-return rate is shown in Figure \ref{fig:my_label4}.   

When $p_i$ is small the log-return rate increases, attains the maximum at about $p_i = 0.07$, and drops to $0$ at $p_i=0.2$.  
\begin{figure}
    \centering
    \includegraphics[width=7cm]{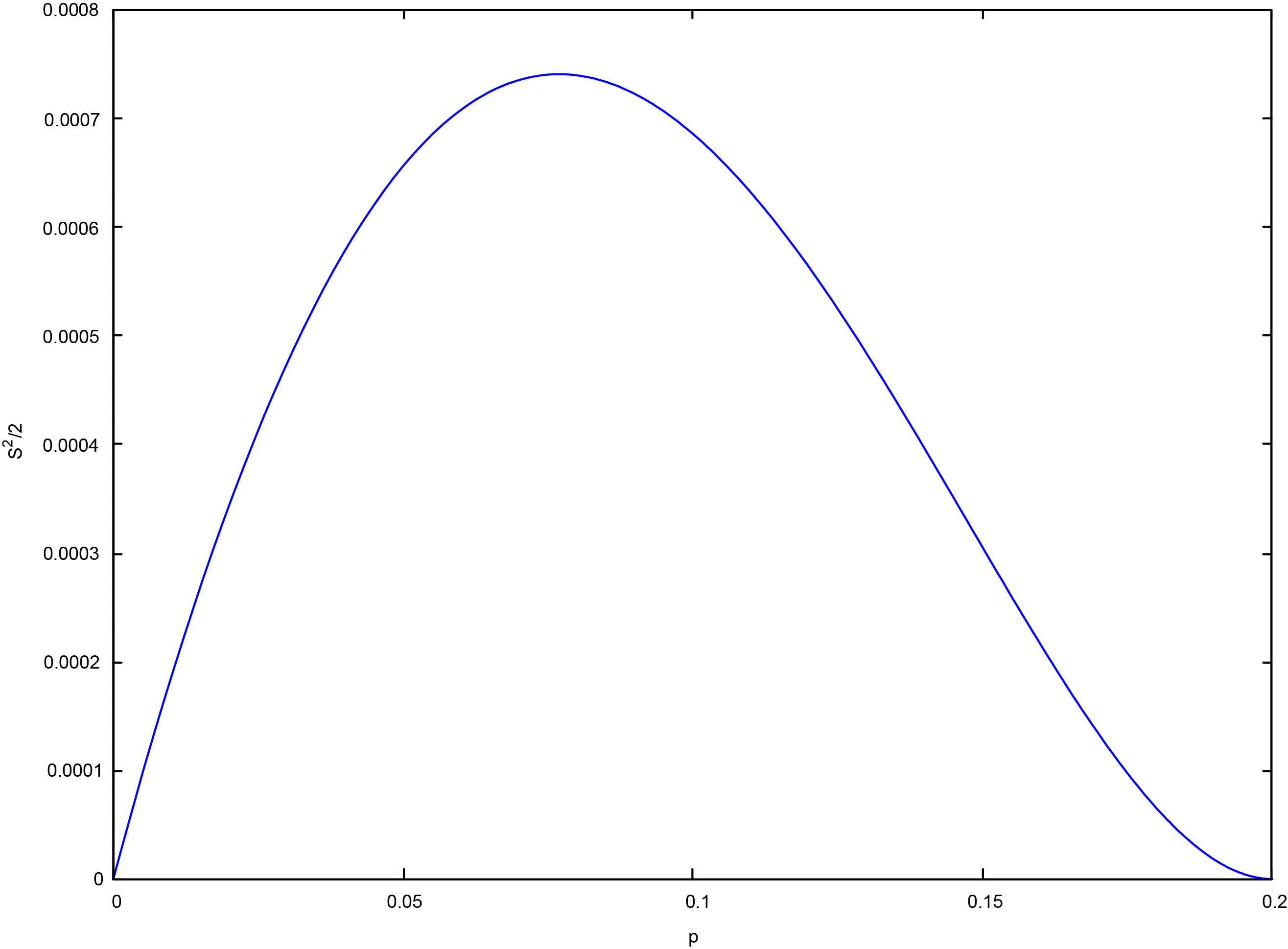}
    \caption{ The log-return rate for the success probability of mining}
    \label{fig:my_label4}
\end{figure}
}
A player will have the motivation to implement a high leverage ratio of over $100$. \void{However, it would be unrealistic to realize such a high leverage ratio using generic financial mechanisms.} The difficulty in collecting such an amount of Mining Assets \change{revisits the reason why people would like to form mining pools.}{may be one of the reasons for forming mining pools, which we discuss in the next section.}

\subsection{Mining Pools as the Players}

When $f^*$ is high, the Player has an option to work as part of a mining pool.
Since the expected simple return rates for
miners do not depend on the size of Mining Assets while the variance is smaller as
the size of Mining Assets becomes large,
the Player can reduce risk by the following methods.

\hypertarget{header-n18}{%
\textbf{\change{Type-1 Mining pools (Risk-share pools)}{Risk-Sharing Mining Pools}}\label{header-n18}}
\Go{ Changed what we call Type-1 pool. We need this modification to align with the theorem statement}
Suppose there are a set \(P\) of Players who agree that they share
the mining reward and dividend it in proportion to the amount of Mining Assets.
 Let \(W_P = \sum_{j\in P}{R_j}/M_P\) for \(M_P = \sum_{j \in P}{M_j}\), the random variable for return rate for the sum of the returns of the \change{Single Nakamoto Game}{\NRP} for Players in $P$.  Then
\(\E[W_P]=\frac{B}{H} - c_P,\) and \(
\V[W_P]= \frac{B^2 (H - M_P)}{M_P H^2},\)
for \(H = \sum_{i}{M_i}\)  and \(c_P= \sum_{j \in P}{\frac{M_j}{M_P}c_j}\). 

$W_P$ is replicated by a Player with a balance sheet $\CB\in \CO(M_P)$ for the aggregated Mining Assets \(M_P\) with cost rate \(c_P\), and each participating miners are modeled as entities which take the part of the return according to the share of the Mining Assets.  This dividend mechanism is modeled out of the Nakamoto Game.
The Player with the aggregated Mining Assets is called the \change{Type-1}{Risk-Sharing} mining pool.

\hiddenmemo{Fuhito: At some point it would be nice to make all the notation consistent: $Z$ and $H$ are both used as the world Hash rate. $Z$ is used for something else in Theorem 3. Speaking of which, notation $m$ is used as the number of (strategic) miners, but also as $\mathbb E[W_i]$ in Section 3.1.  
\textbf{Go: In a very old version $Z$ is used to be the break-even point but it is replaces with $Y$.  About $m$, I replaced it with $\mu$.}
}

In \change{reality}{practice}, \change{Type-1}{Risk-Sharing} mining pools \change{are}{were} first implemented with a proportional reward policy. 
However, it is hard to implement a fair
method that verifies each member's contribution to the mining pool's
hash rate, and many deceiving methods are proposed, such as \change{the Hopping}{hopping}
attacks. Practically, the most popular implementation \change{of the Type-1 mining
pool}{}at this moment is the pay-per-last-$N$-share (PPLNS) pools.

\void{
The return rate from mining pools with
PPLNS policy will not be exactly the same as the above formula. In this paper, we
assume we have an ideal implementation of the Type-1 mining pool for the
sake of simplicity.
}

\hypertarget{header-n42}{%
\textbf{\change{Type-2 Mining pools (Risk-free reward pools)}{Risk-Free Reward Mining Pools}}\label{header-n42}}
Suppose there is a player \(i\) who \change{prepares}{accumulates} hash rate by collecting
contributions by the external collaborators that receive \change{the}{a} risk-free fee.
\change{We call this type of Player \change{the}{a} Type-2 mining pool. }{}

Suppose Player $i$ offers mining reward of $c$ per hash rate per the average time interval for mining a new block, and \Aron{``collects hash rate performed by collaborators'' needs to be rephrased}collects hash rate $\Phi_i(c)$ performed by the collaborators.  Then the pool's success probably of the \change{single Nakamoto Game}{\NRP} is
$p_i = \frac{\Phi_i(c)}{\Phi_i(c)+M_{-i}},$
so $\Phi_i(c) = \frac{p_1}{1-p_i} M_{-i}$.
We can calculate the optimal balance sheet $\CB_i(p_i) = (E_i(p_i), L_i(p_i), M_i(p_i), F_i(p_i)) \in \CO(\frac{p_i}{1-p_i} M_{-i})$ that produces success probability $p_i$ by applying Proposition \ref{prop:uniquebalance} for $\tilde{M} = M_i(p_i) = \frac{p_i}{1-p_i} M_{-i}$.

If the mining pool wants the best log-return rate for a given $c$, then it should prepare Equity $E_i(p_i)$ determined by $p_i$, but does not need to prepare Liabilities because the Mining Assets are already levered at the optimal ratio.  The money prepared as Equity works as the reserve to pay the reward at a cost rate $c$ to collaborators.
The return from the mining pool is replicated an ordinary player with balance sheet $\CB_i(p_i)$ and cost rate $c_i = c$ but with an extra revenue of $r L_i(p_i)$, the interest of Liabilities at the riskfree rate.

\void{
If one wants to start a mining pool using a given amount of Equity $\tilde{E}$, then solve the above equations and find the appropriate $c(\tilde{E})$ such that $\CB_i(\tilde{E})$ gives $E_i = \tilde{E}$.  Then, collect hash rate $\Phi_i(c(\tilde{E}))$ by paying a reward to collaborators according to the cost rate of $c(\tilde{E})$.  It realizes the same performance as a player with balance sheet $\CB_i(\tilde{E})$ that coincides $E_i = \tilde{E}$ and cost rate $c_i =c(\tilde{E})$, but no interest payment for Liabilities.
}

In \change{reality}{practice}, this type of mining pools is implemented as the pay-per-share (PPS) mining
pools. 
It is a separate interesting topic of how we model the market of tradable hash rates that gives fair $\Phi_i$.

\subsection{Predicting Mining Pools' Shares}
\void{We have seen mining pools are also modeled as Players in the Nakamoto Game.
Theorem \ref{thm:equilibrium} predicts the share of the major mining pools.}
\void{We call mining pools that follow the optimal asset allocation strategy capitalist mining pools.}
According to practitioners, \Aron{how is this break-even price defined? for mining to be profitable for all participating miners? \textbf{Go: It is the world break-even price since the article discusses the average.}}the world break-even price of Bitcoin is about 8,000 USD as of November 2019 \cite{breakeven2019}.
\change{It}{This} implies \change{}{that} the world hash rate is estimated to be about 80\%--85\% of the break-even hash rate.
Corollary \ref{cor:share} implies each of growth-rate mining pools will have about 9\%--13\% of the hash rate \change{share}{}assuming \change{all  have the cost rate of the same range}{that they have similar \Aron{have we defined the mining pools cost rate? is it the average cost of rate of the members? \textbf{Go: Yes. It is the weighted average for Risk-sharing, and purchase price for Risk-free Reward}}cost rates}.
\void{
if the world hash rate is about $85\%$ - $80\%$ of the break-even hash rate.
If some capitalist mining pools are cost-efficient as we estimated in Section \ref{sec:example}, the current world hash rate is about $80\%$ of the break-even point, then each of them will occupy about $13\%$ of the world hash rate.  
}

\void{
It is interesting that those predictions coincide roughly with reality.
Assuming all the mining pools have a similar level of cost-efficiency, the world's capacity of the capitalist mining pools is about $11$ by Corollary \ref{cor:capacity}.  It means if you start the $12$-th capitalist mining pool, then it will increase the ratio of the world hash rate over the break-even hash rate to more than $85\%$.
}
\void{
The observation also predicts the bound of cost-efficiency that allows the $51\%$ attack by a capitalist miner. 
}
If a mining pool is exceptionally cost-efficient by more than about $70\%$ to the other mining pools, then \void{Corollary \ref{cor:share} indicates} the pool has a reason to occupy more than $50\%$ of the hash rate.

\void{
because $Y'_i >  (1+70\%) \frac{\HH}{85\%}= 2\HH$.
Thus the mining pool will have an economic motivation to occupy more than $50\%$ of the world hash rate, and it could be achieved in secret.
It is hard to detect such an efficient mining pool if the pool presents its hash rate divided into several mining pools with a share of about $9\%$ since it looks a set of independent mining pools with the ordinary cost rate.  
}

\section{Conclusions\change{ and Future Work}{}}
\label{sec:concl}
\Go{Downsize this section.  Maybe 10 lines}
We analyzed how the return from blockchain mining is optimized by dynamic adjustment of the asset allocation for mining resources and of financial structures for \change{the}{}mining businesses.
We have observed \change{}{that for each miner,} \change{}{how} the \change{reasonable}{optimal} share of the hash rate is \change{estimated}{determined} by the mining reward and the mining cost\change{ for each miner}{}.  

\void{
the following findings.
\begin{enumerate}
\item When miner's share is small, its growth rate is positively correlated with the size of equity.  The larger miners will get much larger. 
\item However, as the miner's equity becomes larger than a certain threshold, the growth rate attains the maximum, and the growth rate changes negatively correlated to the size of equity.  
\item So, there is a moderate size of equity decided by the environment and cost factors that makes equilibrium among the miners who want to maximize the growth rate.
\item As long as the miners are motivated by financial growth and the miners are at the same level of cost-efficiency, it is not reasonable for miners to scale their mining business close to $50\%$.  However, if a miner is about $70\%$ more cost-efficient compared with others, the miner can occupy more than $50\%$ in secret.
\end{enumerate}

We modeled the capitalist miners who are motivated by the growth of equity, but they will not explain all the miners in the world.  It is the issue for the future work to model other various financial motivations and non-financial motivations.  We should be able to analyze the behavior of the miners when various miners with different motivations are collaborating, completing, and negotiating.   We believe it will assure the blockchain systems dependable as the social infrastructure.
}


\printbibliography
%
%
\appendix

\section{Proof for the Approximation in Section \ref{OptimalF}}\label{appendix:approx}

This section gives proofs of the approximation formula for $\E[\log(1+X_i)]$.
First, we can write the optimal $f$ explicitly:

\begin{align*}
\E[\log(1+ X_i)] &= p_i \log(1 + (1-f) r + f u_i) + (1-p_i)\log(1+ (1-f)r + f d_i)
\end{align*}
for
$$
p_i = \frac{M_i}{M_i + M_{-i}},
$$
$$
u_i = \frac{B}{M_i} - c,
$$
and
$$
d_i = -c.
$$
By a straight-forward calculus we obtain
\begin{align*}
     f_\text{max} & =\frac{M_{i}\,\left(r+1\right)\,\left( (M_i + M_{-i}) (c+r)-B\right)}{\left(M_{-i}+M_{i}\right)\,\left(r+c\right)\,\left(M_{i}\,(r+c)-B\right)}.
\end{align*}

Let
$$
\mu = \E[W_i] = \frac{B}{M_i + M_{-i}} -  c
$$
and
$$
\sigma^2 = \V[W_i].
$$

By eliminating $M_i$ we obtain
\begin{align*}
f_\text{max} &= \frac{ \left(B- M_{-i}\,(\mu+c) \right)\,\left(r+1\right)\,\left(\mu-r\right)}{\left(c+r\right)\,\left(M_{-i}\,(\mu + c) (c + r) +B\,(\mu - r)\right)},
\end{align*}
so we have this lemma.
\begin{lemma}
If $0 < r < \mu$, then
\begin{align*}
    f_\text{max} &< \frac{B (r+1) (\mu -r)}{M_{-i} c^3}.
\end{align*}
\end{lemma}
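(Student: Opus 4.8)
The plan is to derive the bound directly from the closed form of $f_\text{max}$ by replacing its numerator and denominator by, respectively, a larger and a smaller quantity. First I would record the signs supplied by the standing assumptions: in the model $B>0$, $c>0$, and $M_{-i}>0$, while the hypothesis $0<r<\mu$ gives $r+1>0$, $\mu-r>0$, and $c+r>c>0$. Hence the factor $(r+1)(\mu-r)$ is positive and common to both sides of the target bound, so it suffices to compare the two fractions that remain once this factor is accounted for; in particular I would not need to determine the sign of $f_\text{max}$ itself.

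For the numerator, since $\mu+c>0$ and $M_{-i}>0$ we have $M_{-i}(\mu+c)>0$, and therefore $B-M_{-i}(\mu+c)<B$. (Substituting $\mu+c=\frac{B}{M_i+M_{-i}}$ shows this numerator in fact equals $\frac{B M_i}{M_i+M_{-i}}$, which lies strictly between $0$ and $B$, but only the upper bound $B$ is needed here.) Replacing the numerator of $f_\text{max}$ by $B$ while keeping the denominator fixed preserves the inequality, giving
\[
f_\text{max} < \frac{B\,(r+1)(\mu-r)}{(c+r)\bigl(M_{-i}(\mu+c)(c+r)+B(\mu-r)\bigr)}.
\]

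It then remains to bound the denominator from below by $M_{-i}c^3$. Dropping the positive term $B(\mu-r)$ yields the lower estimate $(c+r)\cdot M_{-i}(\mu+c)(c+r)=M_{-i}(\mu+c)(c+r)^2$, and then using $\mu+c>c$ together with $(c+r)^2>c^2$ gives $M_{-i}(\mu+c)(c+r)^2>M_{-i}c^3$. Combining this with the displayed inequality and the monotonicity of $x\mapsto 1/x$ on the positive reals produces the claim. I expect no real obstacle beyond careful sign bookkeeping: the one subtle point is that the argument must remain valid even when $B-M_{-i}(\mu+c)$ happens to be negative (in which case $f_\text{max}<0$ and the bound is immediate), and this is handled automatically because $B-M_{-i}(\mu+c)<B$ holds irrespective of its sign.
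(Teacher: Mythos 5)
Your proof is correct and follows the same route the paper intends: the paper states the closed form of $f_\text{max}$ and immediately asserts the lemma ("so we have this lemma"), leaving implicit exactly the estimates you carry out — bounding the numerator factor $B - M_{-i}(\mu+c)$ by $B$ and the (positive) denominator below by $M_{-i}c^3$ via $\mu+c>c$ and $(c+r)^2>c^2$. Your parenthetical observation that the numerator actually equals $\frac{BM_i}{M_i+M_{-i}}$ is a nice sanity check but, as you note, not needed.
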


We are interested in approximating the optimal the log-return, so we may assume $f = O(\mu-r)$ as $\mu \to r$.

\begin{prop}
Assuming $0< r <\mu$ and $f = O(\mu-r)$ as $\mu \to r$, we have
$$
\E[\log(1+X_i)] = \log(1+r)  + f\frac{\mu -r}{1+r} - f^2 \frac{\sigma^2}{2(1+r)^2}  + O((\mu-r)^3).
$$
as $\mu \to r$.
\end{prop}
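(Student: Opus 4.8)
The plan is to center the logarithm at $r$. Writing $X_i = r + f(W_i-r)$, I would factor $1+X_i = (1+r)\bigl(1+z\bigr)$ with $z = \frac{f(W_i-r)}{1+r}$, so that $\log(1+X_i) = \log(1+r) + \log(1+z)$. Since $W_i$ is supported on the two points $u_i = B/M_i - c$ and $d_i = -c$, and $f = O(\mu-r)\to 0$, the variable $z$ lies well inside $(-1,1)$ at both support points once $\mu-r$ is small enough; this legitimizes a Taylor expansion of $\log(1+z)$ about $z=0$.

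Next I would apply Taylor's theorem with Lagrange remainder, $\log(1+z) = z - \tfrac{1}{2}z^2 + R(z)$ with $|R(z)| \le C|z|^3$ on a fixed neighbourhood of $0$, and take expectations. The linear term contributes $\frac{f}{1+r}\,\E[W_i-r] = \frac{f(\mu-r)}{1+r}$, and the quadratic term contributes $-\frac{f^2}{2(1+r)^2}\E[(W_i-r)^2]$. Using $\E[(W_i-r)^2] = \V[W_i] + (\E[W_i]-r)^2 = \sigma^2 + (\mu-r)^2$, this quadratic term splits into the desired $-\frac{f^2\sigma^2}{2(1+r)^2}$ plus an error $-\frac{f^2(\mu-r)^2}{2(1+r)^2}$.

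It then remains to collect the error terms into $O((\mu-r)^3)$. Since $f = O(\mu-r)$, the leftover quadratic piece is $O((\mu-r)^4)$, and the remainder satisfies $|\E[R(z)]| \le C\,\frac{f^3}{(1+r)^3}\,\E|W_i-r|^3 = O(f^3) = O((\mu-r)^3)$. The one point that must be checked carefully---and the main obstacle---is that the absolute moments $\E|W_i-r|^k$ stay bounded uniformly as $\mu\to r$, i.e.\ that the upper support value $u_i = B/M_i - c$ does not blow up. This holds because it is the total hash rate $H$, not $M_i$, that approaches the break-even point $B/(c+r)$ as $\mu\to r$, so $M_i$ stays bounded away from $0$; hence $z$ remains in the convergence domain and all the discarded terms are genuinely $O((\mu-r)^3)$. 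Assembling the three surviving terms yields the claimed expansion.
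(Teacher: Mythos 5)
Your proof is correct and follows essentially the same route as the paper's: both expand the logarithm to second order about $1+r$ (your factorization $1+X_i=(1+r)(1+z)$ is the paper's expansion of $\log(1+r+(X_i-r))$ in disguise) and reduce everything to showing the third moment of $X_i-r$ is $O((\mu-r)^3)$, which hinges on the support point $u_i = B/M_i - c$ staying bounded as $\mu \to r$ with $M_{-i}$ fixed. Your explicit handling of the two points the paper glosses over---the split $\E[(W_i-r)^2] = \sigma^2 + (\mu-r)^2$ with its $O((\mu-r)^4)$ leftover, and the fact that $M_i$ stays bounded away from zero---is a tightening of the same argument, not a different method.
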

\begin{proof}
First we see
$$
\E[\log(1+X_i)] = \log(1+r) + \frac{\E[X_i -r]}{1+r} - \frac{\E[(X_i-r)^2]}{2(1+r)^2} + O((\mu-r)^3)
$$
as $\mu\to r$.
Since $\log(1+r+x) -(\log(1+r) +  \frac{(x-r)}{1+r} - \frac{(x-r)^2}{2(1+r)^2}) < \frac{(x-r)^3}{3(1+r)^3}$ for all $x> -1$, we have
$\E[\log(1+r + (X_i-r))] - (\frac{\E[X_i-r]}{1+r} - \frac{\E[(X_i-r)^2]}{2(1+r)^2}) < \frac{\E[(X_i-r)^3]}{3 (1+r)^3}.$
It suffices to show $\E[(X_i-r)^3]= O((\mu-r)^3)$.
Eliminating $M_i$, we have $p_i = 1- \frac{M_{-i}(\mu + c)}{B}$ 
and $u_i = \frac{B(c+\mu)}{B-M_{-i}(c+\mu)}-c$.  Since $f = O(\mu-r)$ and $0 < r< \mu$,
there exists $C=C(c, r, M_{-i})$ that satisfies
\begin{align*}
    \E[(X_i-r)^3] &= p_i ( (-f r + f u_i)^3 + (1-p_i) ( -fr + f d_i)^3\\
    &\le  C (\mu-r)^3
\end{align*}
for sufficiently small $\mu-r$.
%

Since $\E[X_i -r] =  f (\mu -r)$ and $\V[X_i-r] = f^2\sigma^2$, we have
$$
\E[\log(1+X_i)] = \log(1+r)  + f\frac{\mu -r}{1+r} - f^2 \frac{\sigma^2}{2(1+r)^2}  + O((\mu-r)^3).
$$
\end{proof}

By choosing $f=f^*$ that maximizes 
$ \log(1+r)  + f\frac{\mu -r}{1+r} - f^2 \frac{\sigma^2}{2(1+r)^2}$
we obtain this corollary.

\begin{corollary}
$$
f^* = \frac{(\mu -r)(1+r)}{\sigma^2}
$$
gives an approximation $f^* = f_\text{max} + O((\mu-r)^2)$.   It satisfies
$$
\E[\log(1+X_i(f^*))] = \log(1+r) + \frac{S^2}{2} + O((\mu-r)^3)
$$
for $S = \frac{\mu -r}{\sigma}$
\end{corollary}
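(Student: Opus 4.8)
The plan is to obtain the corollary as the final assembly of the two preceding results: the quadratic approximation of $\E[\log(1+X_i)]$ from the previous proposition, and the explicit formula for $f_\text{max}$ (with $M_i$ already eliminated). I would split the argument along the two assertions of the statement and treat them independently, since the value computation in the second assertion does not actually require the first.

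First I would pin down $f^*$ as the maximizer of the quadratic $g(f) = \log(1+r) + f\frac{\mu-r}{1+r} - f^2\frac{\sigma^2}{2(1+r)^2}$. Differentiating in $f$ and setting $g'(f)=0$ gives at once $f^* = \frac{(\mu-r)(1+r)}{\sigma^2}$, and because $\sigma^2$ approaches a finite nonzero limit as $\mu\to r$, this also records that $f^* = O(\mu-r)$, so the hypothesis of the proposition is satisfied when $f=f^*$ is substituted later.

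Next, for the approximation $f^* = f_\text{max} + O((\mu-r)^2)$, I would compare leading-order expansions in $(\mu-r)$. Using $\mu = \frac{B}{H}-c$ with $H=M_i+M_{-i}$, I first express $\sigma^2 = \V[W_i]$ purely in $\mu, c, M_{-i}$ by eliminating $M_i$ exactly as was done for $f_\text{max}$, obtaining $\sigma^2 = \frac{M_{-i}(\mu+c)^3}{B - M_{-i}(\mu+c)}$. Substituting into $f^* = \frac{(\mu-r)(1+r)}{\sigma^2}$ and expanding about $\mu=r$ gives the linear coefficient $\frac{(1+r)\left(B-M_{-i}(r+c)\right)}{M_{-i}(r+c)^3}$. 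Independently expanding the explicit $f_\text{max}$ formula about $\mu=r$, where its denominator tends to $M_{-i}(c+r)^3$ and its numerator is $\left(B-M_{-i}(r+c)\right)(r+1)(\mu-r)$ to leading order, produces the same linear coefficient. Hence the $O(\mu-r)$ terms cancel and the difference is $O((\mu-r)^2)$. I expect this coefficient matching to be the main obstacle, since it is the only genuinely computational step and demands care to verify that the linear terms coincide exactly, not merely to the same order.

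Finally, for the objective value I would substitute $f=f^*$ directly into the proposition's formula, which is legitimate precisely because $f^*=O(\mu-r)$. Using $f^*\frac{\mu-r}{1+r} = \frac{(\mu-r)^2}{\sigma^2}$ together with $(f^*)^2\frac{\sigma^2}{2(1+r)^2} = \frac{(\mu-r)^2}{2\sigma^2}$, the two contributions combine to $\frac{(\mu-r)^2}{2\sigma^2}$; recognizing $S^2 = (\mu-r)^2/\sigma^2$ then yields $\E[\log(1+X_i(f^*))] = \log(1+r) + \frac{S^2}{2} + O((\mu-r)^3)$. This step is pure substitution and carries no difficulty beyond bookkeeping, so the whole corollary reduces to the one expansion comparison flagged above.
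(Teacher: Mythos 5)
Your proposal is correct and follows essentially the same route as the paper: the paper's proof is the single remark that choosing $f^*$ to maximize the quadratic approximation $\log(1+r) + f\frac{\mu-r}{1+r} - f^2\frac{\sigma^2}{2(1+r)^2}$ yields the corollary, which is exactly your first and third steps. Your middle step—eliminating $M_i$ to write $\sigma^2 = \frac{M_{-i}(\mu+c)^3}{B - M_{-i}(\mu+c)}$ and matching the linear coefficients of $f^*$ and $f_\text{max}$ at $\mu=r$—is a correct and welcome filling-in of the comparison $f^* = f_\text{max} + O((\mu-r)^2)$ that the paper asserts but leaves implicit.
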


By taking $r \to 0$ in addition to $\mu -r \to 0$, we obtain a simpler approximation formula that works when $r$ is small.

\begin{corollary}
We have 
$$
\E[\log(1+X_i)] = r- \frac{r^2}{2}  + f{(\mu -r)} - f^2 \frac{\sigma^2}{2}  + O(\mu^3)
$$
as $r\to 0$ and $\mu-r \to 0$,
and
$$
f^* = \frac{(\mu -r)}{\sigma^2}
$$
gives an approximation $f^* = f_\text{max} + O(\mu^2)$.   It satisfies
$$
\E[\log(1+X_i(f^*))] = r- \frac{r^2}{2} + \frac{S^2}{2} + O(\mu^3).
$$
\end{corollary}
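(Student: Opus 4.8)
The plan is to obtain this corollary directly from the preceding Proposition and Corollary by performing a second Taylor expansion, now in the risk-free rate $r$, under the joint asymptotic $r \to 0$ and $\mu - r \to 0$. In this regime every small quantity is controlled by $\mu$: since $0 < r < \mu$ we have $r = O(\mu)$ and $\mu - r = O(\mu)$, and by the standing hypothesis $f = O(\mu - r) = O(\mu)$.

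First I would start from the expansion established in the Proposition,
\[
\E[\log(1+X_i)] = \log(1+r) + f\frac{\mu - r}{1+r} - f^2\frac{\sigma^2}{2(1+r)^2} + O((\mu-r)^3),
\]
and expand each $r$-dependent coefficient about $r=0$. Writing $\log(1+r) = r - \frac{r^2}{2} + O(\mu^3)$, $\frac{1}{1+r} = 1 + O(\mu)$, and $\frac{1}{(1+r)^2} = 1 + O(\mu)$, the corrections produced by replacing the last two factors by $1$ are of the form $f(\mu-r)\,O(\mu)$ and $f^2\sigma^2\,O(\mu)$; since $f(\mu-r) = O(\mu^2)$ and $f^2\sigma^2 = O(\mu^2)$, both are absorbed into $O(\mu^3)$. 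This gives the first display,
\[
\E[\log(1+X_i)] = r - \frac{r^2}{2} + f(\mu-r) - f^2\frac{\sigma^2}{2} + O(\mu^3).
\]

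Next I would treat the optimal leverage. From the preceding Corollary, $f^* = \frac{(\mu-r)(1+r)}{\sigma^2}$; expanding $1+r = 1 + O(\mu)$ yields $f^* = \frac{\mu-r}{\sigma^2} + O(\mu^2)$, and combining with $f^* = f_\text{max} + O((\mu-r)^2) = f_\text{max} + O(\mu^2)$ shows that the simplified value $\frac{\mu-r}{\sigma^2}$ agrees with $f_\text{max}$ to order $O(\mu^2)$. Finally, substituting $f = \frac{\mu-r}{\sigma^2}$ into the first display and using $S = \frac{\mu-r}{\sigma}$ gives $f(\mu-r) - f^2\frac{\sigma^2}{2} = \frac{(\mu-r)^2}{\sigma^2} - \frac{(\mu-r)^2}{2\sigma^2} = \frac{S^2}{2}$, hence $\E[\log(1+X_i(f^*))] = r - \frac{r^2}{2} + \frac{S^2}{2} + O(\mu^3)$.

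The only genuine care is bookkeeping the error terms: I must confirm that each cross term discarded when replacing $\frac{1}{1+r}$ and $\frac{1}{(1+r)^2}$ by $1$ is truly $O(\mu^3)$, which relies on treating $\sigma^2$ as bounded of order $1$ in this asymptotic so that $f = O(\mu)$ forces both $f(\mu-r)$ and $f^2\sigma^2$ to be $O(\mu^2)$. Granting that, the statement is a mechanical specialization of the $r$-general formulas, so I expect no real obstacle beyond this order-tracking.
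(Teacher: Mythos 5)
Your proposal is correct and takes essentially the same route as the paper: the paper derives this corollary by specializing the preceding Proposition and Corollary under the additional limit $r \to 0$, exactly as you do, leaving the error bookkeeping implicit. Your explicit verification that the discarded terms $f(\mu-r)\,O(\mu)$ and $f^2\sigma^2\,O(\mu)$ are $O(\mu^3)$ (using $f = O(\mu)$ and $\sigma^2 = O(1)$, which holds since $\sigma^2 \to$ a positive constant as $\mu \to r$ with $M_{-i}$, $B$, $c$ fixed) merely fills in details the paper omits.
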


\section{Nash Equilibrium for Sharpe Ratio}\label{sec:sharedetail}
This section gives the calculation for the Nash equilibrium for the Sharpe ratio.

For each $i\in I$, the Shape ratio is 
\[S_i = \frac{1 - \frac{c_i + r}{ B}(M_i  + M_{-i})}{\sqrt{M_{-i}/M_{i}}}\].

Let $(\MH_1, \MH_2, \cdots, \MH_m)$ be the strategy vector at the equilibrium point.  Then it satisfies for all $i\in I$
\begin{align}\label{appendixB:equilibrium}
\MH_i &= \frac{1}{3}(Y'_i - \sum_{j \in I, i\ne j}{\MH_j} - Z)    
\end{align}

for $Y'_i = \frac{B}{c_i + r}$ because $\frac{\partial S_i}{\partial M_i} = 0$ at $M_i = \MH_i$.
By taking sum of Equation (\ref{appendixB:equilibrium}) over $i\in I$, we obtain
$$
\sum_{i\in I}{\MH_i} = \frac{1}{3}\left(\sum_{i\in I}{Y'_i}  - (m-1) \sum_{i \in I}{\MH_i} - m Z\right),
$$
so
$$
\sum_{i\in I}{\MH_i} = \frac{1}{m+2}\sum_{j\in I}{Y'_j} - \frac{m}{m+2}Z$$
hence
\begin{align*}
\HH &= Z + \sum_{i\in I}{\MH_i}\\
&= \frac{1}{m+2}\sum_{i\in I}{Y'_i} + \frac{2}{m+2}Z.
\end{align*}

Equation (\ref{appendixB:equilibrium}) also means
\begin{align*}
(1-\frac{1}{3})\MH_i &= \frac{1}{3}(Y'_i - \sum_{j \in I}{\MH_j} - Z)   \\
&= \frac{1}{3}(Y'_i - \HH),
\end{align*}
hence we have
\begin{align*}
\MH_i &= \frac{1}{2}(Y'_i -\HH)\\
&= \frac{1}{2}\left(Y'_i - \frac{1}{m+2}\sum_{j\in I}{Y'_j}\right) - \frac{1}{m+2}Z.
\end{align*}

\end{document}